%% file: arxiv.tex
\documentclass[11pt]{article}

\usepackage[letterpaper,margin=1in]{geometry}
\usepackage[T1]{fontenc}
\usepackage[utf8]{inputenc}
\usepackage{amsmath,amssymb,amsthm,mathtools}
\usepackage{aliascnt}
\usepackage{microtype}
\usepackage{enumitem}
\usepackage{mleftright}
\usepackage{booktabs}
\usepackage{xcolor}
\usepackage[authoryear]{natbib}
\usepackage[colorlinks=true,linkcolor=magenta,citecolor=teal,urlcolor=magenta]{hyperref}
\usepackage{algorithm}
\usepackage{algpseudocode}

\setlist[itemize]{leftmargin=1.6em,itemsep=0.25em,topsep=0.25em}
\setlist[enumerate]{leftmargin=1.8em,itemsep=0.25em,topsep=0.25em}
\allowdisplaybreaks
\usepackage[capitalize,noabbrev]{cleveref}

\usepackage{xcolor}
\definecolor{myOrange}{RGB}{235, 105, 92}
\definecolor{myMaroon}{RGB}{128, 0, 0}
\definecolor{maroon}{rgb}{0.5,0.0,0.0}
\definecolor{heavymaroon}{rgb}{0.3,0.0,0.0}

\definecolor{mygreen}{RGB}{160, 200, 140}

\usepackage{hyperref}
\hypersetup{ 
colorlinks=true, 
linkcolor=myOrange, 
citecolor=myMaroon, filecolor=cyan, 
urlcolor=mygreen
}

\makeatletter
\let\oldalglinenumber\alglinenumber
\renewcommand{\alglinenumber}[1]{%
  \protected@edef\@currentlabel{#1}%
  \oldalglinenumber{#1}%
}
\makeatother

\input{notation}

\title{Breaking the 1/3 Barrier for Approximate Nash Equilibria in Symmetric Bimatrix Games}
\author{
Ashkan Soleymani\\
MIT\\
\texttt{ashkanso@mit.edu}
\and
Patrick Jaillet\\
MIT\\
\texttt{jaillet@mit.edu}
\and
Gabriele Farina\\
MIT\\
\texttt{gfarina@mit.edu}
}

\date{}

\begin{document}
\maketitle

\begin{abstract}
We give the first unconditional family of polynomial-time algorithms
that breaks the $1/3$ approximation barrier for symmetric bimatrix
games. For every fixed $\delta>0$, an algorithm in this family
computes a symmetric $(1-1/\sqrt{2}+\delta)$-approximate Nash
equilibrium, improving on the previous best guarantee of
$(1/3+\delta)$ due to \citet*{kontogiannis2011approximability}.
\end{abstract}

\tableofcontents
\newpage

\section{Introduction} \label{sec:intro}

Nash equilibrium is the standard notion of stability in strategic
interaction. Nash's theorem guarantees that every finite game
admits a mixed-strategy equilibrium~\citep*{nash1950equilibrium}.
However, finding an equilibrium is $\mathsf{PPAD}$-complete even in
two-player bimatrix games%
~\citep*{daskalakis2009complexity,chen2009settling}.

The hardness of computing exact equilibria has motivated the study
of additive approximations. With payoffs normalized to $[0,1]$,
an $\eps$-approximate Nash equilibrium is a mixed-strategy profile
in which neither player can gain more than $\eps$ by deviating
unilaterally. The small-support theorem of
\citet*{lipton2003playing} yields a quasi-polynomial-time
approximation scheme for bimatrix games.
\citet{rubinstein2017settling} showed that this running time is
nearly optimal under the exponential-time hypothesis for
$\mathsf{PPAD}$. Specifically, under this assumption, there exists
a constant $\eps>0$ for which computing an $\eps$-approximate
Nash equilibrium of an $n\times n$ bimatrix game requires time
$n^{\log^{1-o(1)} n}$. The same assumption also rules out a PTAS
for general bimatrix games. This motivates the search for the
smallest approximation error achievable in polynomial time.

The class of \emph{symmetric bimatrix games} provides a natural
setting for this question. We write $\mathcal G\coloneqq(A,A^\top)$ for a
game in this class, where $A\in[0,1]^{n\times n}$ is the row
player's payoff matrix and $A^\top$ is the column player's
payoff matrix. At a pure strategy profile $(i,j)$, the row player
receives $A_{ij}$ and the column player receives $A_{ji}$.
Every symmetric bimatrix game admits a symmetric Nash equilibrium,
in which both players use the same mixed
strategy~\citep*{nash1951non}. However, symmetry does not remove
the hardness of exact equilibrium computation. The classical
symmetrization reduction of \citet*{gale1950symmetric}, together
with the $\mathsf{PPAD}$-completeness of bimatrix
Nash~\citep*{daskalakis2009complexity,chen2009settling},
implies that computing a symmetric Nash equilibrium is also
$\mathsf{PPAD}$-complete. This motivates studying whether
symmetry allows better polynomial-time approximation guarantees.

Symmetry nevertheless simplifies the approximation problem. A single mixed
strategy $p$ defines the symmetric profile $(p,p)$, in which both
players receive the same payoff and have the same regret, given by
\[
    \operatorname{Reg}(p)
    \coloneqq \max_{i\in[n]} e_i^\top A p - p^\top A p.
\]
Thus, finding a symmetric $\eps$-approximate Nash equilibrium
amounts to finding a strategy $p$ whose best-response payoff $\max_{i\in[n]} e_i^\top A p$
exceeds its self-payoff $p^\top A p$ by at most $\eps$, or equivalently,
$\operatorname{Reg}(p)\leq\eps$. Despite this simplification,
achieving an approximation guarantee below $1/3$ in polynomial
time had remained elusive.

\paragraph{The one-third frontier.}
For every fixed $\delta>0$, \citet*{kontogiannis2011approximability}
gave a polynomial-time algorithm that computes a symmetric
$(1/3+\delta)$-approximate Nash equilibrium. Their quadratic-programming
formulation also yields an approximate equilibrium with regret strictly
below $1/3$ given an exact KKT point. However, their polynomial-time
algorithm computes only an approximate KKT point and guarantees
$1/3+\delta$. \citet{ahmadi2021semidefinite} later showed that a rank-two solution
of an SDP relaxation can be rounded to a $1/3$-approximate Nash
equilibrium for a symmetric game.
This guarantee is conditional, since the relaxation is not guaranteed
to return a rank-two solution. Thus, prior to our work, the best
known unconditional polynomial-time guarantee for symmetric
bimatrix games remained $1/3+\delta$.

In parallel, a line of work on general bimatrix games steadily
improved polynomial-time approximation guarantees. Early work by
\citet*{daskalakis2006note} established a $1/2$ guarantee,
followed by the $(3-\sqrt{5})/2\approx0.38197$ guarantee of
\citet*{daskalakis2007progress} and the
$1/2-1/(3\sqrt{6})\approx0.36392$ guarantee of
\citet*{bosse2010new}. \citet*{tsaknakis2008optimization}
developed an optimization-based approach achieving
$0.3393+\delta$, which \citet*{deligkas2023polynomial}
later refined to $1/3+\delta$ for every bimatrix game.
Symmetric games had reached the $1/3+\delta$ guarantee
more than a decade earlier, but no unconditional
polynomial-time algorithm was known to break the $1/3$
barrier for either class. Thus, prior to our work,
$1/3+\delta$ remained the best known unconditional
polynomial-time guarantee for both general and symmetric
bimatrix games.

\paragraph{Our contribution.}
We give the first unconditional polynomial-time algorithm that
breaks the $1/3$ approximation barrier for symmetric bimatrix
games. One concrete instantiation computes a symmetric
$3/10$-approximate Nash equilibrium. More generally, for every
fixed $\delta>0$, an algorithm in our family computes a symmetric
$\eps$-approximate Nash equilibrium with
\[
    \eps = 1-\frac{1}{\sqrt{2}}+\delta,
\]
where $1-1/\sqrt{2}\approx0.29289<1/3$.
The algorithm returns a single mixed strategy $p$, yielding the
symmetric profile $(p,p)$. Thus, our result improves the best
known unconditional polynomial-time approximation guarantee for
symmetric bimatrix games from $1/3+\delta$ to
$1-1/\sqrt{2}+\delta$.

Unlike previous approaches based on quadratic programming or descent
methods, our algorithm uses auxiliary zero-sum games as its only
optimization subroutines. We begin with the \emph{saddle-gap game}
$\mathcal S$, an auxiliary zero-sum game in which the row player
chooses an ordered pair $(i,j)$, the column player chooses a pure
strategy $k$, and the payoff is $A_{ik}-A_{kj}$. If the value $v$
of $\mathcal S$ is at most $\eps$, an optimal column strategy
yields a symmetric $\eps$-approximate Nash equilibrium. Otherwise,
the first marginal $a$ of an optimal row strategy of $\mathcal S$
has self-payoff at least $2v$ and earns at least $v$ against any
opponent strategy. The self-payoff bound alone recovers the
one-third benchmark of \citet*{kontogiannis2011approximability}.
The second bound allows us to control the self-payoff when
modifying $a$, providing the additional information needed
to improve on this guarantee.

Our main technical contribution is an amortized repair procedure
built around a sequence of further auxiliary zero-sum games.
At iteration $t$, the algorithm identifies the near-best-response
set $B_t$ of the current strategy $p_t$ and solves the restricted
zero-sum game $\mathcal B_t$ on $B_t\times B_t$, with payoffs
inherited from $A$. If its value is at least $1-\eps$, an optimal
row strategy directly yields a symmetric $\eps$-approximate Nash
equilibrium. Otherwise, an optimal column strategy provides a
repair direction that controls the payoffs of the near-best
responses. The remaining strategies have enough slack to stay
controlled after the update, while the guarantee inherited from
$\mathcal S$ helps preserve the self-payoff. Our amortized analysis
shows how to balance these effects over successive repairs.
A single repair already achieves $\sqrt{7}/2-1\approx0.32288$,
breaking the $1/3$ barrier. Repeated repairs yield the concrete
$3/10$ guarantee and approach $1-1/\sqrt{2}$ in the limit.

\subsection{Our Results} \label{sec:main}

Our main theorem gives a two-parameter family of algorithms. The
parameter $K$ bounds the number of repair steps, while $m$ determines
the repair weights $\alpha_t\coloneqq1/(m+t)$.

\begin{theorem}[Main result]\label{thm:finite-main}
For every pair of integers $K\ge1$ and $m\ge2$, there is a
polynomial-time algorithm that, given a symmetric bimatrix game
$\mathcal G\coloneqq(A,A^\top)$ with $A\in[0,1]^{n\times n}$, computes
a symmetric $\eps_{K,m}$-approximate Nash equilibrium, where
\[
\numberthis{eq:epsilon-K-m}
\eps_{K,m}\coloneqq
\frac{2m^2+2(K-2)m+K(K-1)+2}
{6m^2+(8K-12)m+3(K-1)(K-2)} .
\]
The algorithm solves the auxiliary zero-sum game $\mathcal S$,
with $n^2$ row actions and $n$ column actions, followed by at
most $K$ restricted zero-sum games $\mathcal B_t$, each with
at most $n$ actions per player.
\end{theorem}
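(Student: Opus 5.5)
The plan is to prove \cref{thm:finite-main} by running the procedure sketched in the introduction and showing that one of a short list of stopping certificates must fire within $K$ repairs. Throughout, set $\eps\coloneqq\eps_{K,m}$.

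\textbf{Step 1: the saddle-gap game.} Write $\mathcal S$ so that the row player picks $(i,j)$, the column player picks $k$, and the payoff is $A_{ik}-A_{kj}$. Let $v$ be its value. Fix an optimal column strategy $q$. Minimax gives $\max_i (Aq)_i-\min_j (q^\top A)_j\le v$. Since $\min_j (q^\top A)_j\le q^\top A q$, this yields $\operatorname{Reg}(q)\le v$. So if $v\le\eps$ we output $q$.

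Otherwise take an optimal row strategy with first marginal $a$ and second marginal $b$. For every $k$ we have $a^\top A e_k-e_k^\top A b\ge v$. Averaging over an arbitrary $y$ gives $a^\top A y\ge v+y^\top A b\ge v$, which is the ``earns at least $v$'' bound. Taking $y=a$ and then $y=b$ gives $a^\top A a\ge v+a^\top A b\ge 2v$. These two facts, $a^\top Aa\ge 2v$ and $a^\top Ay\ge v$ for all $y$, are the only information about $a$ I will carry forward.

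\textbf{Step 2: the repair iteration.} Set $p_0\coloneqq a$. At step $t$, if $\operatorname{Reg}(p_t)\le\eps$ we stop. Otherwise define $B_t\coloneqq\{i:(Ap_t)_i\ge\max_\ell (Ap_t)_\ell-\eps\}$ (the exact threshold may need tuning to $\alpha_t$) and solve the restricted game $\mathcal B_t$ on $B_t\times B_t$.

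If its value is at least $1-\eps$, an optimal row strategy $r$ supported on $B_t$ satisfies $r^\top A r\ge 1-\eps$. Best responses pay at most $1$, so $\operatorname{Reg}(r)\le\eps$ and we output $r$.

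Otherwise an optimal column strategy $q_t$, supported on $B_t$, satisfies $(Aq_t)_i<1-\eps$ for all $i\in B_t$. We update
\[
p_{t+1}=(1-\alpha_t)p_t+\alpha_t q_t,\qquad \alpha_t=1/(m+t).
\]
With this choice of weights, $p_t$ is a uniform-type average: $a$ has weight $m/(m+t)$ and each $q_s$ has weight $1/(m+t)$. This lets me apply the bound $a^\top A y\ge v$ with explicit coefficients.

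\textbf{Step 3: the amortized potential (main obstacle).} I would track two quantities: the self-payoff $s_t\coloneqq p_t^\top Ap_t$, and the best-response value $M_t\coloneqq\max_i(Ap_t)_i$.

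The lower bound on the self-payoff expands into four kinds of terms.
\begin{itemize}
\item Cross terms $q_t^\top A p_t\ge M_t-\eps$, because $q_t$ lives on near-best responses.
\item Terms $a^\top A q_s\ge v$.
\item Self terms $a^\top A a\ge 2v$.
\item All remaining terms, bounded below by $0$.
\end{itemize}

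The upper bound on $M_{t+1}$ uses a case split on the index $i$.
\begin{itemize}
\item For $i\in B_t$, the repair gives $(Aq_t)_i<1-\eps$.
\item For $i\notin B_t$, the slack $(Ap_t)_i<M_t-\eps$ absorbs the bounded $\alpha_t$ perturbation.
\end{itemize}
This step requires that the previously repaired sets stay controlled. I would attempt to prove by induction a linear invariant of the form
\[
M_t-s_t\le c_t+d_t v\quad\text{whenever no certificate has fired,}
\]
with coefficients that are explicit rational functions of $m$ and $t$.

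\textbf{Step 4: closing the argument.} The regret after $K$ steps is bounded by a linear function of $v$ that is decreasing in $v$. It is compared against the threshold $v>\eps$ from Step 1, together with the fact that all $\mathcal B_t$ values are below $1-\eps$. Choosing $\eps$ where these constraints meet should yield the closed form \eqref{eq:epsilon-K-m}; the denominator's quadratic in $K$ reflects summing $\sum_s s$-type weights.

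The hard part is Step 3: getting the invariant to close uniformly. Specifically, the strategies outside $B_t$ must not drift above the threshold over several repairs. My first attempt would be to define $B_t$ relative to a fixed target level $1-\eps$ rather than relative to $M_t$, so that a single invariant $\max_i(Ap_t)_i\le s_t+\eps$ can be checked termwise.

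Polynomial time is immediate. The algorithm solves one LP for $\mathcal S$ (with $n^2$ rows and $n$ columns), at most $K$ LPs on $B_t\times B_t$, and performs $O(Kn^2)$ arithmetic.
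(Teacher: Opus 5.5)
\textbf{There is a genuine gap.} Your architecture matches the paper's algorithm: the saddle-gap game, $p_0=a$, restricted games on near-best-response sets, and weights $1/(m+t)$. But the analytic core is missing, and the concrete choices you do make cannot go below $1/3$.

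\textbf{The window of $B_t$ is wrong.} With window $\eps$, rows outside $B_t$ are still controlled (because $\alpha_t\le 1/2$). However, the cross term then only gives $q_t^\top A p_t\ge M_t-\eps$. The one-step regret bound becomes $R_{t+1}=(1-\alpha_t)^2R_t+\alpha_t(1-\eps)+\alpha_t(1-\alpha_t)(\eps-L_t)$, where $L_t$ is the column floor $p_t^\top Ae_j\ge L_t$. The floor's benefit is therefore at best cancelled.
\begin{itemize}
\item For one repair, this forces $\eps\ge(1-\alpha+\alpha^2)/(3-3\alpha+2\alpha^2)$, which is $\ge 1/3$ because $3(1-\alpha+\alpha^2)-(3-3\alpha+2\alpha^2)=\alpha^2$.
\item For your weights over $K$ repairs, the resulting threshold has the form (numerator)/(denominator) with $3\cdot\text{numerator}-\text{denominator}=K(K+1)/2>0$. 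So it exceeds $1/3$ for every $K,m$.
\end{itemize}
The paper uses the window $\alpha_t\eps/(1-\alpha_t)=\eps/(m+t-1)$. This is exactly the smallest width for which rows outside $B_t$ satisfy the same bound $(1-\alpha_t)M_t+\alpha_t(1-\eps)$ as rows inside. With it, the cross term costs only $\alpha_t^2\eps$ instead of $\alpha_t(1-\alpha_t)\eps$, and that is where the improvement over $1/3$ comes from.

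\textbf{The obstacle you name does not exist, and your fallback cannot work.} You never need to keep previously repaired rows controlled over several steps. The bound $M_{t+1}\le(1-\alpha_t)M_t+\alpha_t(1-\eps)$ holds for every row using only the current $B_t$. The term $\alpha_t(1-\alpha_t)M_t$ contributed by $q_t^\top Ap_t$ turns $(1-\alpha_t)M_t$ into $(1-\alpha_t)^2M_t$. Hence the regret of $p_{t+1}$ depends on $p_t$ only through $M_t-s_t$. This yields a closed recursion:
\[
R_{t+1}=(1-\alpha_t)^2R_t+\alpha_t(1-\eps)+\alpha_t^2\eps-\alpha_t(1-\alpha_t)L_t,\qquad L_{t+1}=(1-\alpha_t)L_t,
\]
with $R_0=1-2\eps$ and $L_0=\eps$. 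Your candidate invariant $\max_i(Ap_t)_i\le s_t+\eps$ is just $\operatorname{Reg}(p_t)\le\eps$. That is the stopping condition, and it fails at $t=0$, where only $1-2\eps>\eps$ is available, so it cannot drive an induction. Anchoring $B_t$ at the fixed level $1-\eps$ would also destroy the equalization described above.

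\textbf{Your mixture weights are off by one.} In $p_t$, $a$ carries weight $(m-1)/(m+t-1)$, not $m/(m+t)$, and each $z_s$ carries weight $1/(m+t-1)$. So $L_t=\eps(m-1)/(m+t-1)$, and $(m+t-1)L_t$ is constant. This constancy is precisely what makes $(m+t)^2R_{t+1}=(m+t-1)^2R_t+(m+t)-\eps(2m+t-2)$ telescope. Setting $R_K=\eps$ then produces $\eps_{K,m}$. Without this recursion, your Step~4 has nothing concrete to solve.
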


The theorem gives a concrete improvement over the $1/3$ guarantee
with only a finite number of repairs. In particular, choosing
$K=13$ and $m=14$ yields $\eps_{13,14}=3/10$, using at most
$13$ repair steps.

\begin{corollary}\label{cor:three-tenths}
Every symmetric bimatrix game with payoffs in $[0,1]$ has a
polynomial-time computable symmetric $3/10$-approximate Nash equilibrium.
\end{corollary}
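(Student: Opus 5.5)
The corollary follows from \cref{thm:finite-main} by choosing specific parameters and evaluating the closed form \eqref{eq:epsilon-K-m}. I will take $K=13$ and $m=14$, as suggested right after the theorem. Both are fixed integers with $K\ge1$ and $m\ge2$, so the theorem applies. It then gives a polynomial-time algorithm that outputs a mixed strategy $p$ with $\operatorname{Reg}(p)\le\eps_{13,14}$. The subroutines are one zero-sum game $\mathcal S$ of size $n^2\times n$ and at most $13$ restricted games $\mathcal B_t$ of size at most $n\times n$. The only thing left to check is the arithmetic identity $\eps_{13,14}=3/10$.

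For the numerator, $2m^2=392$, $2(K-2)m=2\cdot11\cdot14=308$ and $K(K-1)=13\cdot12=156$. Adding the constant $2$ gives $392+308+156+2=858$.

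For the denominator, $6m^2=1176$, $(8K-12)m=92\cdot14=1288$ and $3(K-1)(K-2)=3\cdot12\cdot11=396$. These sum to $1176+1288+396=2860$.

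Hence
\[
\eps_{13,14}=\frac{858}{2860}=\frac{3\cdot286}{10\cdot286}=\frac{3}{10}.
\]

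The running time is polynomial in $n$. Solving a zero-sum game reduces to a linear program of polynomial size, and the number of repair steps is the constant $K=13$. The profile $(p,p)$ is therefore a symmetric $3/10$-approximate Nash equilibrium computed in polynomial time.

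There is no real obstacle beyond \cref{thm:finite-main} itself; the only step needing care is the arithmetic check above. One could also ask whether some other integer pair gives a smaller value. That is irrelevant here, since the corollary only asserts the value $3/10$.
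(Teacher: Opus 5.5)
Your proposal is correct and follows the paper's proof: you instantiate Theorem~\ref{thm:finite-main} with $K=13$ and $m=14$, and verify $\eps_{13,14}=858/2860=3/10$, exactly as the paper does. Your arithmetic checks out term by term.
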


The guarantee improves further as the number of repairs increases.
Taking $K\to\infty$ with $m/K\to1/\sqrt{2}$ yields
$\eps_{K,m}\to1-1/\sqrt{2}\approx0.29289$.
Since $K$ and $m$ can be chosen as constants depending only
on $\delta$, this gives the following polynomial-time guarantee.

\begin{corollary}\label{cor:limit}
For every fixed $\delta>0$, there is a polynomial-time algorithm
that computes a symmetric
$\left(1-\frac1{\sqrt2}+\delta\right)$-approximate Nash equilibrium
of every symmetric bimatrix game with payoffs in $[0,1]$.
\end{corollary}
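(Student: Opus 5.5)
The plan is to derive the corollary directly from \cref{thm:finite-main}. Given $\delta>0$, we exhibit constants $K\ge1$ and $m\ge2$, depending only on $\delta$, with $\eps_{K,m}\le 1-1/\sqrt2+\delta$, and then run the corresponding algorithm. Since $K$ and $m$ do not depend on the game, the algorithm solves one zero-sum game $\mathcal S$ of size $n^2\times n$ and at most $K=O(1)$ restricted games $\mathcal B_t$ of size at most $n\times n$. Each of these is a linear program of polynomial size, so the total running time is polynomial in $n$ (for fixed $\delta$).

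To choose the parameters, set $m\coloneqq\lceil K/\sqrt2\rceil$, so that $m\ge2$ once $K\ge2$, and write $c_K\coloneqq m/K\to1/\sqrt2$. Dividing the numerator and denominator of \eqref{eq:epsilon-K-m} by $K^2$ gives
\[
\eps_{K,m}=\frac{2c_K^2+2c_K+1+O(1/K)}{6c_K^2+8c_K+3+O(1/K)}.
\]
The $O(1/K)$ terms collect $-4m/K^2$, $-K/K^2$ and $2/K^2$ in the numerator, and $-12m/K^2$ and $(-9K+6)/K^2$ in the denominator. All of these are $O(1/K)$ because $c_K\le 1$ for $K\ge2$. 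The denominator stays bounded below by $3-O(1/K)$. Hence $\eps_{K,m}\to f(1/\sqrt2)$, where
\[
f(c)\coloneqq\frac{2c^2+2c+1}{6c^2+8c+3}
\]
is continuous on $[0,\infty)$.

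It remains to evaluate $f(1/\sqrt2)$. The numerator is $2+\sqrt2=\sqrt2(1+\sqrt2)$ and the denominator is $6+4\sqrt2=2(1+\sqrt2)^2$. So
\[
f(1/\sqrt2)=\frac{\sqrt2}{2(1+\sqrt2)}=\frac{1}{2+\sqrt2}=1-\frac1{\sqrt2}.
\]
Consequently, there is $K_0(\delta)$ such that $\eps_{K,\lceil K/\sqrt2\rceil}\le 1-1/\sqrt2+\delta$ for all $K\ge K_0(\delta)$. Fixing such a $K$ completes the argument.

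The only step needing care is making the limit quantitative, so that $K_0$ is an explicit constant. This means bounding the error terms above uniformly, using $|c_K-1/\sqrt2|\le 1/K$ and the Lipschitz continuity of $f$ near $1/\sqrt2$. This is routine; the substantive work lies entirely in \cref{thm:finite-main}, which we take as given.
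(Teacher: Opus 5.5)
Your proposal is correct and follows essentially the same route as the paper: set $m=\lceil K/\sqrt2\rceil$, pass to the limit $(2c^2+2c+1)/(6c^2+8c+3)$ at $c=1/\sqrt2$, evaluate it as $1-1/\sqrt2$, and invoke \cref{thm:finite-main} with constants $K\ge2$ and $m\ge2$. The only difference is that the paper also differentiates this limit to show that $c=1/\sqrt2$ is the optimal ratio. That step motivates the choice of $m$ but is not needed for the statement. Your explicit $O(1/K)$ bookkeeping is slightly more careful than the paper's bare limit argument.
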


The improvement begins with a single repair, which achieves
$\sqrt{7}/2-1\approx0.32288$ with a suitable choice of weight.
The multi-repair analysis builds on this idea by preserving and
exploiting the payoff guarantee from $\mathcal S$ across successive
repairs, leading to the stronger bounds in \Cref{thm:finite-main}.

\subsection{Technical Overview and Proof Strategy}
\label{sec:intro-technical-overview}

The algorithm maintains a single mixed strategy that is updated
through successive repairs. Throughout this overview, we denote
the current strategy by $p$, corresponding to $p_t$ at iteration
$t$ in the formal algorithm.

\paragraph{Saddle-gap game.}
We begin with the auxiliary zero-sum game $\mathcal S$, which we call
the \emph{saddle-gap game}. Its row player chooses an ordered pair
$(i,j)$, its column player chooses a pure strategy $k$, and the row
player's payoff is $A_{ik}-A_{kj}$. For a mixed column strategy $x$ of $\mathcal S$, the row player's
best-response payoff is
\[
    \max_i e_i^\top A x-\min_j x^\top A e_j.
\]
This quantity upper-bounds the regret of the symmetric profile
$(x,x)$ in the original game $\mathcal G$, since
$x^\top A x\ge\min_j x^\top A e_j$. In particular, if the
value $v$ of $\mathcal S$ is at most $\eps$, an optimal column
strategy $x$ of $\mathcal S$ satisfies
\[
    \operatorname{Reg}(x)
    =\max_i e_i^\top A x-x^\top A x
    \le \max_i e_i^\top A x-\min_j x^\top A e_j
    =v\le\eps.
\]
Thus, $(x,x)$ is a symmetric $\eps$-approximate Nash equilibrium
of $\mathcal G$.

Suppose now that $v>\eps$. Let $\sigma$ be an optimal row strategy
of $\mathcal S$, with first and second marginals $a$ and $b$.
By the maximin property of $\sigma$, for every $k\in[n]$,
\[
    a^\top A e_k-e_k^\top A b\ge v.
\]
Since all payoffs are nonnegative, $a^\top A e_k\ge v$ for every $k$.
Averaging the same inequality over $k\sim a$ and $k\sim b$ gives
\[
    a^\top A a\ge a^\top A b+v
    \ge b^\top A b+2v\ge 2v.
\]
We initialize the repair procedure at $p_0\coloneqq a$. Since all payoffs are
at most one, these bounds give
\[
    \operatorname{Reg}(p_0)\le 1-2v<1-2\eps,
    \qquad
    p_0^\top A e_j\ge v>\eps\quad\forall j\in[n].
\]
The regret bound alone already recovers the one-third guarantee of
\citet*{kontogiannis2011approximability}, since
$1-2\eps=\eps$ when $\eps=1/3$. To improve on this guarantee,
we exploit the additional property
$p_0^\top A e_j\ge v>\eps$ for every $j\in[n]$.
This lower bound also holds against any mixed column strategy,
allowing us to control the self-payoff during repairs.
Our analysis tracks how much of this guarantee is preserved
across successive updates.

\paragraph{A generic repair step.}
Consider a current strategy $p$ and a repair weight $\alpha\in(0,1)$.
The algorithm seeks a strategy $z$ to mix with $p$, assigning it
weight $\alpha$. To identify the rows that need to be controlled
directly, we define the near-best-response set
\[
\numberthis{eq:intro-near-best}
B\coloneqq\left\{i\in[n]:
e_i^\top A p\ge
\max_h e_h^\top A p-
\frac{\alpha\eps}{1-\alpha}
\right\}.
\]
The set $B$ contains every best response to $p$. Rows outside $B$
have a payoff gap against $p$ that will help control their payoffs
after the update.

We solve the auxiliary zero-sum game $\mathcal B$ with row and
column action sets $B$ and payoff matrix $A_{B,B}$. If its value
is at least $1-\eps$, minimax gives a strategy $w\in\Delta(B)$
such that
\[
    w^\top A e_j\ge1-\eps\qquad\forall j\in B.
\]
Since $w$ is supported on $B$, we have $w^\top A w\ge1-\eps$.
Every pure deviation earns at most one, so
$\operatorname{Reg}(w)\le\eps$. The algorithm therefore stops
and outputs the symmetric $\eps$-approximate Nash equilibrium
$(w,w)$ of $\mathcal G$. 

Otherwise, an optimal column strategy $z\in\Delta(B)$ of
$\mathcal B$ satisfies
\[
    e_i^\top A z\le1-\eps\qquad\forall i\in B.
\]
We use $z$ as a repair direction and update $p$ to
\[
\numberthis{eq:intro-p-plus}
p^+\coloneqq(1-\alpha)p+\alpha z.
\]
The choice of the near-best-response window in \eqref{eq:intro-near-best} ensures that every
pure row has a controlled payoff against $p^+$. If $i\in B$,
the minimax guarantee for $z$ gives
\[
\numberthis{eq:reg_i_in}
e_i^\top A p^+
\le (1-\alpha)\max_h e_h^\top A p+\alpha(1-\eps).
\]
If $i\notin B$, the definition of $B$ and the upper bound
$e_i^\top A z\le1$ instead give
\[
\numberthis{eq:reg_i_out}
e_i^\top A p^+
\le (1-\alpha)\left(\max_h e_h^\top A p-
\frac{\alpha\eps}{1-\alpha}\right)+\alpha.
\]
The right-hand sides of \eqref{eq:reg_i_in} and
\eqref{eq:reg_i_out} coincide. This explains the width
$\alpha\eps/(1-\alpha)$ in \eqref{eq:intro-near-best} and yields
\[
\max_{i\in[n]}e_i^\top A p^+
\le (1-\alpha)\max_h e_h^\top A p+\alpha(1-\eps).
\]

\paragraph{Tracking regret through repairs.}
The preceding argument bounds the best-response payoff against $p^+$.
To bound the regret of $(p^+,p^+)$ in $\mathcal G$, we also need
to control its self-payoff. For this purpose, we track two bounds
for the current strategy $p$, namely a regret budget $R$ and a
lower bound $L$ on its payoff against every pure column strategy,
\[
    \operatorname{Reg}(p)
    =\max_i e_i^\top A p-p^\top A p\le R,
    \qquad
    p^\top A e_j\ge L\quad\forall j\in[n].
\]
Using the update in \eqref{eq:intro-p-plus}, we expand the
self-payoff of the repaired strategy as
\[
    (p^+)^\top A p^+
    =(1-\alpha)^2p^\top A p +\alpha(1-\alpha)(p^\top A z+z^\top A p)
    +\alpha^2z^\top A z.
\]
Both cross terms in $p^\top A z+z^\top A p$ admit useful lower bounds. Since $z$ is
supported on the near-best-response set $B$, we have
\[
    z^\top A p
    &\ge \max_h e_h^\top A p-
    \frac{\alpha\eps}{1-\alpha}.
\]
Moreover, since $p^\top A e_j\ge L$ for every pure column
strategy $j$, we also have
\[
    p^\top A z\ge L.
\]
Combining these bounds with the best-response estimate from
the preceding paragraph and using $z^\top A z\ge0$, we obtain
\[
\numberthis{eq:intro-one-step}
    \operatorname{Reg}(p^+)
    &=\max_i e_i^\top A p^+-(p^+)^\top A p^+\\
    &\le (1-\alpha)^2R+\alpha(1-\eps)+\alpha^2\eps
    -\alpha(1-\alpha)L.
\]
The contribution of the previous regret is scaled by
$(1-\alpha)^2$, while the term $-\alpha(1-\alpha)L$ reflects
the lower bound on $p^\top A z$.

The same repair preserves part of this lower bound for subsequent
iterations. Indeed, since all payoffs are nonnegative,
\[
    (p^+)^\top A e_j
    =(1-\alpha)p^\top A e_j+\alpha z^\top A e_j \ge(1-\alpha)L\qquad\forall j\in[n].
\]
Thus, each repair provides a new regret bound while preserving
a fraction of the payoff guarantee needed for the next step.
This allows the procedure to be repeated, with the remaining
bound contributing to the analysis of subsequent repairs.

\paragraph{From one repair to amortized repairs.}
As a warm-up, we analyze the single-repair algorithm in
\Cref{sec:warmup} by applying \eqref{eq:intro-one-step} once.
Starting from the saddle-gap marginal $a$, we use the initial
budgets $R\coloneqq1-2\eps$ and $L\coloneqq\eps$. For the repaired strategy
$p^+ = (1-\alpha)a+\alpha z$, the one-step inequality gives
\[
    \operatorname{Reg}(p^+)
    &\le 1-\alpha+\alpha^2-2\eps(1-\alpha).
\]
To ensure that the regret is at most $\eps$, it suffices to choose
\[
    \eps\ge\frac{1-\alpha+\alpha^2}{3-2\alpha}.
\]
Minimizing the right-hand side over $\alpha\in(0,1)$ shows
that a single repair suffices to compute a symmetric
$\eps$-approximate Nash equilibrium with
$\eps=\sqrt7/2-1\approx0.32288$, already below the
$1/3$ benchmark.

We now extend the single-repair procedure to multiple repairs.
The key observation is that the payoff lower bound against every
pure column strategy is partially preserved after each update,
allowing us to use it again in subsequent repairs. Whenever
iteration $t$ reaches the repair branch, the restricted zero-sum
game $\mathcal B_t$ provides a repair direction $z_t$. We choose
the weights $\alpha_t\coloneqq1/(m+t)$ and update
\[
    p_{t+1}
    &\coloneqq\left(1-\frac1{m+t}\right)p_t+\frac1{m+t}z_t.
\]
To analyze these updates, let $R_t$ and $L_t$ denote the regret
and column-floor budgets for $p_t$, respectively. Applying the
one-step bounds at each repair gives
\[
    R_{t+1}
    &\coloneqq\left(1-\frac1{m+t}\right)^2R_t
      +\frac{1-\eps}{m+t}
      +\frac{\eps}{(m+t)^2} -\frac1{m+t}\left(1-\frac1{m+t}\right)L_t,\\
    L_{t+1}
    &\coloneqq\left(1-\frac1{m+t}\right)L_t,
\]
with initial values $R_0\coloneqq1-2\eps$ and $L_0\coloneqq\eps$, as guaranteed
by the saddle-gap certificate.

The chosen repair weights make both recurrences telescope.
The floor recurrence gives
\[
    L_t=\eps\frac{m-1}{m+t-1}.
\]
Substituting this expression into the regret recurrence and
multiplying by $(m+t)^2$ yields
\[
    (m+t)^2R_{t+1}
    &=(m+t-1)^2R_t
      +(m+t)-\eps(2m+t-2).
\]
Summing over $t=0,1,\ldots,K-1$ cancels the intermediate
scaled regret terms, leaving only scaled versions of the initial regret $(m-1)^2R_0$ and
final regret $(m+K-1)^2R_K$. Choosing $\eps$ so that the final regret
budget satisfies $R_K=\eps$ yields exactly
$\eps=\eps_{K,m}$ in \eqref{eq:epsilon-K-m}.

Consequently, after at most $K$ repairs, the algorithm
computes a symmetric $\eps_{K,m}$-approximate Nash equilibrium
of $\mathcal G$. In particular, choosing $K=13$ and $m=14$
gives $\eps_{13,14}=3/10$, yielding a symmetric
$3/10$-approximate Nash equilibrium. Moreover, taking
$K\to\infty$ with $m/K\to1/\sqrt2$ gives
$\eps_{K,m}\to1-1/\sqrt2$. Thus, for every fixed $\delta>0$,
we obtain a polynomial-time algorithm that computes a symmetric
$(1-1/\sqrt2+\delta)$-approximate Nash equilibrium.

\section{Related Work}\label{sec:related_work}

Computing an exact Nash equilibrium is $\mathsf{PPAD}$-complete
~\citep*{daskalakis2009complexity,chen2009settling}. The
symmetrization reduction of \citet*{gale1950symmetric} extends
this hardness to computing symmetric equilibria in symmetric
bimatrix games. For approximate equilibria,
\citet*{lipton2003playing} established the existence of
$\eps$-approximate equilibria with logarithmic support, yielding
an $n^{O(\log n/\eps^2)}$-time algorithm.
\citet*{rubinstein2017settling} showed that, under the
exponential-time hypothesis for $\mathsf{PPAD}$, computing an
$\eps$-approximate Nash equilibrium for some constant $\eps>0$
requires time $n^{\log^{1-o(1)} n}$. These results motivate
the study of approximation guarantees achievable in polynomial
time.

For symmetric bimatrix games, \citet*{kontogiannis2011approximability}
developed a quadratic-programming formulation whose exact KKT
points yield approximate equilibria with regret strictly below
$1/3$. Given such a point, the corresponding approximate
equilibrium can be recovered efficiently. Their polynomial-time
algorithm computes an approximate KKT point and guarantees a
symmetric $(1/3+\delta)$-approximate Nash equilibrium for every
fixed $\delta>0$. Thus, the sub-third guarantee requires access
to an exact KKT point, while the unconditional polynomial-time
guarantee remains $1/3+\delta$.

\citet*{ahmadi2021semidefinite} subsequently studied an SDP
relaxation of a quadratic formulation for Nash equilibria in
bimatrix games. A rank-one solution yields an exact Nash
equilibrium, while a rank-two solution can be rounded to a
$1/3$-approximate Nash equilibrium for a symmetric game.
Since the relaxation is not guaranteed to return a solution
of rank at most two, this guarantee is conditional.
The algorithm of \citet*{deligkas2023polynomial} achieves a
$(1/3+\delta)$-approximate Nash equilibrium for every fixed
$\delta>0$ in general bimatrix games and therefore also applies
to symmetric games. However, even when the input game is symmetric, their algorithm
does not require both players to use the same mixed strategy.
Consequently, the resulting approximate Nash equilibrium is
not guaranteed to be symmetric. To the best of our knowledge,
our algorithm is the first unconditional polynomial-time
algorithm to compute a symmetric $\eps$-approximate Nash
equilibrium with $\eps<1/3$ for every symmetric bimatrix game.

The polynomial-time approximation frontier for arbitrary bimatrix
games has advanced through a long sequence of results.
\citet*{kontogiannis2006polynomial} gave an early $3/4$-approximation.
\citet*{daskalakis2006note} improved this to $1/2$, followed
by the $(3-\sqrt{5})/2\approx0.38197$ guarantee of
\citet*{daskalakis2007progress}.
\citet*{bosse2010new} improved the guarantee to
$1/2-1/(3\sqrt{6})\approx0.36392$.
Using a descent method for a nonconvex regret objective,
\citet*{tsaknakis2008optimization} achieved $0.3393+\delta$,
which remained the best known guarantee for roughly fifteen
years. \citet*{deligkas2023polynomial} later refined this
approach to obtain a $(1/3+\delta)$-approximate Nash equilibrium
for every fixed $\delta>0$. Our result improves on this
guarantee for symmetric bimatrix games while also ensuring
that the output is a symmetric approximate Nash equilibrium.

An $\eps$-well-supported Nash equilibrium ($\eps$-WSNE) requires
every pure strategy played with positive probability by either
player to be within $\eps$ of a best response. This is stronger
than an ordinary $\eps$-approximate Nash equilibrium, which only
bounds each player's gain from a unilateral deviation.
For general bimatrix games, \citet*{kontogiannis2010well} gave
a polynomial-time $2/3$-WSNE algorithm.
\citet*{fearnley2016approximate} improved the guarantee to
approximately $0.6608$, and \citet*{czumaj2019distributed}
obtained $0.6528$. \citet*{deligkas2023polynomialwell}
later achieved a $1/2$-WSNE guarantee in polynomial time
for every bimatrix game.

For symmetric bimatrix games, \citet*{czumaj2014approximate}
gave a polynomial-time $(1/2+\delta)$-WSNE algorithm.
In subsequent work, \citet*{czumaj2017zero} developed an
algorithm for arbitrary bimatrix games that outputs either
a $1/3$-approximate Nash equilibrium or a $1/2$-WSNE.
Since a $1/2$-WSNE only guarantees ordinary regret at most
$1/2$, this does not provide a uniform $1/3$-approximation
guarantee for ordinary Nash equilibrium.

These WSNE results are complementary to our work, as they
address a stronger notion of equilibrium. Their guarantees
do not improve the $1/3$ benchmark for ordinary approximate
Nash equilibria. Our algorithm breaks this benchmark for
symmetric bimatrix games while returning a symmetric profile.

\section{Preliminaries}\label{sec:preliminaries}

Let $[n]\coloneqq\{1,\ldots,n\}$, and let $\Delta_n$ denote the
simplex of probability distributions over $[n]$. For any nonempty
set $S\subseteq[n]$, we write $\Delta(S)$ for the distributions
supported on $S$. We denote by $e_i$ the $i$th standard basis
vector, identified with pure strategy $i$. Throughout, we assume
that the input payoff matrices have entries in $[0,1]$.

A bimatrix game with $n$ strategies per player is specified by
a pair $(R,C)$ of payoff matrices. At a mixed-strategy profile
$(x,y)\in\Delta_n\times\Delta_n$, the row and column players
receive payoffs $x^\top R y$ and $x^\top C y$, respectively.
Their corresponding regrets are
\[
    \max_{i\in[n]}e_i^\top R y-x^\top R y,
    \qquad \textrm{and} \qquad 
    \max_{j\in[n]}x^\top C e_j-x^\top C y.
\]
A profile is an $\eps$-approximate Nash equilibrium if both
players have regret at most $\eps$.

For the symmetric input game $\mathcal G=(A,A^\top)$,
both players receive the same payoff $p^\top A p$ at a
symmetric profile $(p,p)$. Moreover, the column player's
payoff from deviating to pure strategy $j$ is
\[
    p^\top A^\top e_j=e_j^\top A p,
\]
which equals the row player's payoff from the same deviation.
Thus, both players have the same regret at any symmetric
profile, as formalized in the following lemma.

\begin{lemma}\label{lem:symmetric-regret}
For every $p\in\Delta_n$, the symmetric profile $(p,p)$ is an
$\eps$-approximate Nash equilibrium of $\mathcal G$ if and
only if
\[
    \operatorname{Reg}(p)
    = \max_{i\in[n]}e_i^\top A p-p^\top A p\le\eps.
\]
\end{lemma}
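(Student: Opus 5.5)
The plan is to unfold the definition of an $\eps$-approximate Nash equilibrium from the preliminaries for the bimatrix game $(R,C)=(A,A^\top)$ at the profile $(x,y)=(p,p)$. Then I will check that the two regret conditions, one for each player, collapse to the single inequality $\operatorname{Reg}(p)\le\eps$. Since the definition asks for both regrets to be at most $\eps$, the equivalence will follow once I show that the two regrets are literally equal.

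\emph{Row player.} With $R=A$ and $x=y=p$, the row player's regret is
\[
\max_{i\in[n]}e_i^\top A p-p^\top A p,
\]
which is $\operatorname{Reg}(p)$ by definition. No work is needed here.

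\emph{Column player.} With $C=A^\top$, the regret is
\[
\max_{j\in[n]}p^\top A^\top e_j-p^\top A^\top p.
\]
I will rewrite each term using that a scalar equals its transpose. First, $p^\top A^\top e_j=(e_j^\top A p)^\top=e_j^\top A p$, as already noted in the text before the lemma. Second, $p^\top A^\top p=(p^\top A p)^\top=p^\top A p$. Substituting both identities and renaming the index $j$ to $i$ turns the column player's regret into exactly $\operatorname{Reg}(p)$.

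\emph{Conclusion.} Both regrets equal $\operatorname{Reg}(p)$. So the condition ``both regrets are at most $\eps$'' holds if and only if $\operatorname{Reg}(p)\le\eps$, which proves both directions at once. There is no real obstacle in this argument. The only point needing care is the self-payoff term $p^\top A^\top p$: one must observe that it also equals $p^\top Ap$, not just the deviation terms.
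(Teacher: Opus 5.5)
Your proposal is correct and follows the paper's own approach. Both arguments note that at $(p,p)$ the two players share the self-payoff $p^\top A p$ and the best-response payoff $\max_i e_i^\top A p$, so both regrets equal $\operatorname{Reg}(p)$; you simply spell out the transposition identities $p^\top A^\top e_j=e_j^\top A p$ and $p^\top A^\top p=p^\top A p$, which the paper leaves implicit.
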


\begin{proof}
At $(p,p)$, both players receive payoff $p^\top A p$ and
have the same best-response payoff $\max_{i\in[n]}e_i^\top A p$.
Their common regret is therefore $\operatorname{Reg}(p)$,
and the claim follows.
\end{proof}

We also use the finite minimax theorem in the following form
for the auxiliary zero-sum games considered in our algorithms.

\begin{lemma}\label{lem:minimax-operational}
Let $M$ be the row player's payoff matrix in a finite
zero-sum game, and let $T\in\mathbb{R}$. If the value of $M$
is at least $T$, then the row player has a mixed strategy
that guarantees a payoff of at least $T$ against every pure
column strategy. If the value of $M$ is less than $T$,
then the column player has a mixed strategy against which
every pure row strategy receives a payoff of at most $T$.
\end{lemma}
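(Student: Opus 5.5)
This is the finite minimax theorem, so I will derive it from von Neumann's theorem rather than prove anything new. Write $\Delta_r$ and $\Delta_c$ for the row and column simplices. The value of $M$ is
\[
 \operatorname{val}(M)=\max_{x\in\Delta_r}\min_{y\in\Delta_c}x^\top M y=\min_{y\in\Delta_c}\max_{x\in\Delta_r}x^\top M y .
\]
Both the maximum and the minimum are attained, since the simplices are compact and the inner functions are continuous. Continuity holds because $\min_y x^\top M y$ is a minimum of finitely many linear functions of $x$, and similarly for the max over $x$.

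I will first record the reduction to pure strategies. For fixed $x$, the map $y\mapsto x^\top M y$ is linear on $\Delta_c$, so its minimum is attained at a vertex. Hence $\min_{y}x^\top M y=\min_j x^\top M e_j$, and symmetrically $\max_x x^\top M y=\max_i e_i^\top M y$.

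For the first claim, suppose $\operatorname{val}(M)\ge T$. I take a maximizer $x^\star$ of the max-min expression. Then
\[
 \min_j (x^\star)^\top M e_j=\operatorname{val}(M)\ge T,
\]
so $x^\star$ guarantees payoff at least $T$ against every pure column strategy.

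For the second claim, suppose $\operatorname{val}(M)<T$. I take a minimizer $y^\star$ of the min-max expression. Then
\[
 \max_i e_i^\top M y^\star=\operatorname{val}(M)<T,
\]
so every pure row strategy receives at most $T$ against $y^\star$, in fact strictly less.

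There is no real obstacle here. The only point needing care is that the optima are attained, which gives a specific strategy rather than a sequence approaching the value. If one prefers to avoid compactness, attainment also follows from the LP formulation of the game: the LP is feasible and bounded, so it has optimal solutions, and strong duality gives the equality of the two expressions.
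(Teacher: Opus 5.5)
Your proposal is correct and takes essentially the same route as the paper, which simply invokes von Neumann's finite minimax theorem (adding that optimal strategies can be computed by linear programming); you spell out the attainment of the optima and the reduction from mixed to pure opponent strategies that the paper leaves implicit. One cosmetic point: your continuity argument for $x\mapsto\min_y x^\top M y$ already uses the vertex reduction you only state in the next paragraph, so those two steps should be presented in the opposite order.
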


\begin{proof}
The claim follows from the finite minimax theorem~\citep{vonNeumann1928}. Optimal
strategies for both players can be computed by linear programming.
\end{proof}

\section{The Saddle-Gap Game}\label{sec:saddle}

Given the symmetric game $\mathcal G=(A,A^\top)$, we define
an auxiliary zero-sum game $\mathcal S$, called the
\emph{saddle-gap game}. Its row player chooses an ordered pair
$(i,j)\in[n]^2$, its column player chooses a pure strategy
$k\in[n]$, and its payoff matrix $H\in\mathbb R^{n^2\times n}$
is defined by
\[
    H_{(i,j),k}\coloneqq A_{ik}-A_{kj}.
\]
For a mixed column strategy $x\in\Delta_n$, the row player's
best-response payoff in $\mathcal S$ is
\[
    \max_{i,j}\sum_k x_k(A_{ik}-A_{kj})
    &=\max_{i,j}\left(e_i^\top A x-x^\top A e_j\right)\notag\\
    &=\max_i e_i^\top A x-\min_j x^\top A e_j.
    \numberthis{eq:saddle-gap-expression}
\]
This quantity upper-bounds the regret of the symmetric profile
$(x,x)$ in $\mathcal G$, since
$x^\top A x\ge\min_j x^\top A e_j$.

Let $v$ denote the value of $\mathcal S$. By the minimax theorem,
\[
    v
    &\coloneqq
    \min_{x\in\Delta_n}
    \max_{(i,j)\in[n]^2}
    \sum_k x_k(A_{ik}-A_{kj})
    =
    \max_{\sigma\in\Delta_{n^2}}
    \min_{k\in[n]}
    \sum_{i,j}\sigma_{ij}(A_{ik}-A_{kj}).
\]
Let $x\in\Delta_n$ be an optimal column strategy and
$\sigma\in\Delta_{n^2}$ an optimal row strategy of $\mathcal S$.
We denote the first and second marginals of $\sigma$ by
$a,b\in\Delta_n$, respectively, where
\[
    a_i\coloneqq\sum_j\sigma_{ij},
    \qquad
    b_j\coloneqq\sum_i\sigma_{ij}.
\]

The value $v$ determines how the algorithm proceeds.
If $v\le\eps$, the optimal column strategy $x$ already yields
an approximate equilibrium. Otherwise, the row strategy
$\sigma$ provides the initial guarantees needed for the
repair procedure.

\begin{lemma}\label{lem:low-saddle-gap}
If $v\le\eps$, then $(x,x)$ is a symmetric
$\eps$-approximate Nash equilibrium of $\mathcal G$.
\end{lemma}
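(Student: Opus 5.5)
The plan is to combine the saddle-gap expression \eqref{eq:saddle-gap-expression} with \Cref{lem:symmetric-regret}. By \Cref{lem:symmetric-regret}, it suffices to show that $\operatorname{Reg}(x)=\max_i e_i^\top A x - x^\top A x\le\eps$, where $x$ is the optimal column strategy of $\mathcal S$.

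First, since $x$ is an optimal column strategy, its worst-case payoff in $\mathcal S$ equals the value of the game. Concretely, the row player's best-response payoff against $x$ is exactly $v$:
\[
    \max_{(i,j)\in[n]^2}\sum_k x_k(A_{ik}-A_{kj}) = v.
\]
By \eqref{eq:saddle-gap-expression}, this quantity equals
\[
    \max_i e_i^\top A x-\min_j x^\top A e_j .
\]

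Second, we lower bound the self-payoff. Since $x\in\Delta_n$, the quantity $x^\top A x=\sum_j x_j\, x^\top A e_j$ is a convex combination of the column payoffs $x^\top A e_j$. Hence $x^\top A x\ge\min_j x^\top A e_j$.

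Chaining these two steps gives
\[
    \operatorname{Reg}(x)\le \max_i e_i^\top A x-\min_j x^\top A e_j = v\le\eps,
\]
which is the claim. There is no real obstacle. The only point requiring care is that optimality of $x$ means its best-response payoff is exactly $v$, and this uses the minimax identity already stated in the definition of $v$.
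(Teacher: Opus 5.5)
Your proposal is correct and follows the paper's proof essentially step for step. Both use optimality of $x$ together with \eqref{eq:saddle-gap-expression} to get $\max_i e_i^\top A x-\min_j x^\top A e_j=v\le\eps$, then bound the regret via $x^\top A x\ge\min_j x^\top A e_j$, and conclude with \Cref{lem:symmetric-regret}; your convex-combination justification of that middle inequality is a detail the paper leaves implicit.
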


\begin{proof}
By the optimality of $x$ and \eqref{eq:saddle-gap-expression},
\[
    \max_i e_i^\top A x-\min_j x^\top A e_j
    =v\le\eps.
\]
Since $x^\top A x\ge\min_j x^\top A e_j$, we have
\[
    \operatorname{Reg}(x)
    =\max_i e_i^\top A x-x^\top A x
    \le\max_i e_i^\top A x-\min_j x^\top A e_j 
    &\le\eps.
\]
The claim follows from \Cref{lem:symmetric-regret}.
\end{proof}

\begin{lemma}\label{lem:high-saddle-gap}
For every pure strategy $k\in[n]$,
\[
\numberthis{eq:dual-cert}
    a^\top A e_k-e_k^\top A b\ge v.
\]
Consequently,
\[
\numberthis{eq:column-floor-v}
    a^\top A e_k\ge v\qquad\forall k\in[n],
\]
and
\[
\numberthis{eq:self-payoff-v}
    a^\top A a\ge2v.
\]
\end{lemma}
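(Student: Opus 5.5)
The plan is to read off \eqref{eq:dual-cert} from the maximin characterization of $v$ and then derive the other two bounds by nonnegativity and averaging.

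\textbf{Step 1: the certificate \eqref{eq:dual-cert}.} Since $\sigma$ is an optimal row strategy of $\mathcal S$, the max--min expression for $v$ gives $\sum_{i,j}\sigma_{ij}(A_{ik}-A_{kj})\ge v$ for every pure column $k$. I will split this sum into two pieces and marginalize each one. The first piece satisfies $\sum_{i,j}\sigma_{ij}A_{ik}=\sum_i a_iA_{ik}=a^\top A e_k$. The second satisfies $\sum_{i,j}\sigma_{ij}A_{kj}=\sum_j b_jA_{kj}=e_k^\top A b$. Substituting both yields \eqref{eq:dual-cert}.

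\textbf{Step 2: the column floor \eqref{eq:column-floor-v}.} All entries of $A$ lie in $[0,1]$ and $b\in\Delta_n$, so $e_k^\top A b\ge0$. Hence $a^\top A e_k\ge v+e_k^\top A b\ge v$.

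\textbf{Step 3: the self-payoff bound \eqref{eq:self-payoff-v}.} Inequality \eqref{eq:dual-cert} is linear in $e_k$, so it extends to every mixed column $y\in\Delta_n$ as $a^\top A y-y^\top A b\ge v$.
\begin{itemize}
\item Taking $y=a$ gives $a^\top A a\ge a^\top A b+v$.
\item Taking $y=b$ gives $a^\top A b\ge b^\top A b+v$.
\end{itemize}
Chaining the two inequalities gives $a^\top A a\ge b^\top A b+2v$. Nonnegativity of $A$ then gives $b^\top A b\ge0$, so $a^\top A a\ge 2v$.

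There is no real obstacle here. The only point needing care is Step 1: the first marginal pairs with the row index of $A$, while the second marginal pairs with the column index through $A_{kj}$. Everything after that is averaging and nonnegativity.
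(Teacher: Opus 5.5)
Your proposal is correct and follows the paper's proof essentially step for step. Like the paper, you obtain the certificate by marginalizing the maximin guarantee of $\sigma$ and get the column floor from $e_k^\top A b\ge0$; your extension to mixed $y=a$ and $y=b$ is the paper's averaging over $k\sim a$ and $k\sim b$, and both arguments conclude with $b^\top A b\ge0$.
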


\begin{proof}
Since $\sigma$ is an optimal row strategy of $\mathcal S$,
it guarantees a payoff of at least $v$ against every pure
column strategy $k\in[n]$. Hence,
\[
    \sum_{i,j}\sigma_{ij}(A_{ik}-A_{kj})\ge v
    \qquad\forall k\in[n].
\]
Expressing this inequality in terms of the marginals $a$
and $b$ gives \eqref{eq:dual-cert}. Since the entries of
$A$ are nonnegative, $e_k^\top A b\ge0$, which proves
\eqref{eq:column-floor-v}.

To establish the self-payoff bound, we average
\eqref{eq:dual-cert} over $k\sim a$ and $k\sim b$,
respectively, obtaining
\[
    a^\top A a-a^\top A b \ge v, \quad \textrm{ and } \quad 
    a^\top A b-b^\top A b \ge v.
\]
Adding these inequalities yields
\[
    a^\top A a-b^\top A b\ge2v.
\]
Since $b^\top A b\ge0$, we obtain
$a^\top A a\ge2v$, proving \eqref{eq:self-payoff-v}.
\end{proof}

When $v>\eps$, the bounds in \Cref{lem:high-saddle-gap}
give the initial regret and payoff guarantees used by
the repair algorithms.

\begin{corollary}\label{cor:initial-certificate}
If $v>\eps$, then $p_0\coloneqq a$ satisfies
\[
    \max_i e_i^\top A p_0-p_0^\top A p_0\le1-2\eps,
    \qquad
    p_0^\top A e_j\ge\eps\quad\forall j\in[n].
\]
\end{corollary}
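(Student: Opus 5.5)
The proof is a direct consequence of \Cref{lem:high-saddle-gap}. We combine its two bounds with the hypothesis $v>\eps$ and the normalization $A\in[0,1]^{n\times n}$.

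\textbf{Column floor.} By \eqref{eq:column-floor-v}, $p_0^\top A e_j=a^\top A e_j\ge v$ for every $j\in[n]$. Since $v>\eps$, this gives $p_0^\top A e_j\ge\eps$ for all $j$, which is the second claim.

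\textbf{Regret bound.} Every entry of $A$ is at most one, and $p_0$ is a probability vector. Hence $e_i^\top A p_0\le1$ for every $i$, so $\max_i e_i^\top A p_0\le1$. By \eqref{eq:self-payoff-v}, $p_0^\top A p_0=a^\top A a\ge2v>2\eps$. Combining the two gives
\[
    \max_i e_i^\top A p_0-p_0^\top A p_0\le 1-2v<1-2\eps,
\]
which is in fact slightly stronger than the stated bound.

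There is no real obstacle here. The only points to check are that the upper bound of one on best-response payoffs uses the normalization of $A$, and that the inequalities with $v$ pass to $\eps$ in the correct direction. They do, because $v>\eps$ enters as a lower bound on the self-payoff and on the column payoffs.
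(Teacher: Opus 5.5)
Your proof is correct and follows the paper's own argument: the column floor comes from \eqref{eq:column-floor-v} together with $v>\eps$, and the regret bound combines $a^\top A a\ge 2v>2\eps$ from \eqref{eq:self-payoff-v} with the fact that every pure payoff against $p_0$ is at most one. Your sharper intermediate bound $1-2v<1-2\eps$ also appears in the paper's technical overview.
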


\begin{proof}
The second inequality follows from \eqref{eq:column-floor-v}
and $v>\eps$. Moreover, \eqref{eq:self-payoff-v} gives
\[
    p_0^\top A p_0=a^\top A a\ge2v>2\eps.
\]
Since every pure best-response payoff is at most one,
we have
\[
    \max_i e_i^\top A p_0-p_0^\top A p_0\le1-2\eps,
\]
as required.
\end{proof}

\begin{algorithm}[!ht]
\caption{Saddle-Gap Algorithm with Single Repair}
\label{alg:warmup-one-repair}
\begin{algorithmic}[1]
\Require A symmetric game $\mathcal G=(A,A^\top)$ with
$A\in[0,1]^{n\times n}$ and a repair weight $\alpha\in(0,1)$

\State Set $\eps\coloneqq\eps(\alpha)$.

\State Solve the saddle-gap game $\mathcal S$ with payoff matrix $H$
and value
\[
    v
    =\min_{x\in\Delta_n}
      \max_{(i,j)\in[n]^2}
      \sum_k x_k(A_{ik}-A_{kj})
    =\max_{\sigma\in\Delta_{n^2}}
      \min_{k\in[n]}
      \sum_{i,j}\sigma_{ij}(A_{ik}-A_{kj}).
\]

\State Let $x$ and $\sigma$ be optimal column and row strategies
of $\mathcal S$, respectively, and let $a,b$ be the marginals
of $\sigma$.

\If{$v\le\eps$}
    \State \Return $(x,x)$
\EndIf

\If{$\max_{i\in[n]}e_i^\top A a-a^\top A a\le\eps$}
    \State \Return $(a,a)$
\EndIf

\State Define the near-best-response set
\[
    B\coloneqq\left\{i\in[n]:
    e_i^\top A a\ge
    \max_h e_h^\top A a-
    \frac{\alpha\eps}{1-\alpha}
    \right\}.
\]

\State Solve the restricted zero-sum game $\mathcal B$ with
row and column action sets $B$ and payoff matrix $A_{B,B}$.
Its value is
\[
    \rho =\min_{z\in\Delta(B)}
      \max_{i\in B}e_i^\top A z =\max_{w\in\Delta(B)}
      \min_{j\in B}w^\top A e_j.
\]

\If{$\rho\ge1-\eps$}
    \State Compute $w\in\Delta(B)$ satisfying
    \[
        w^\top A e_j\ge1-\eps
        \qquad\forall j\in B.
    \]
    \State \Return $(w,w)$
\Else
    \State Compute $z\in\Delta(B)$ satisfying
    \[
        e_i^\top A z\le1-\eps
        \qquad\forall i\in B.
    \]
    \State Set $p^+\coloneqq(1-\alpha)a+\alpha z$.
    \State \Return $(p^+,p^+)$
\EndIf

\end{algorithmic}
\end{algorithm}

\section{Warm-up: The Single-Repair Algorithm}\label{sec:warmup}

We begin with the single-repair algorithm, which uses the saddle-gap
certificate and at most one restricted zero-sum game. Its analysis
shows how the near-best-response set and the payoff guarantee from
$\mathcal S$ together yield an approximation factor below $1/3$.
The same estimates will be used later in \Cref{sec:amortized-algorithm,sec:amortized-analysis} for successive repairs in the
amortized algorithm.

Fix a repair weight $\alpha\in(0,1)$ and define
\[
\numberthis{eq:eps-alpha}
    \eps(\alpha)\coloneqq\frac{1-\alpha+\alpha^2}{3-2\alpha}.
\]

\begin{theorem}\label{thm:one-repair}
For every $\alpha\in(0,1)$, the single-repair algorithm
(\Cref{alg:warmup-one-repair}) computes a symmetric
$\eps(\alpha)$-approximate Nash equilibrium of $\mathcal G$.
Moreover,
\[
    \min_{\alpha\in(0,1)}\eps(\alpha)
    =\frac{\sqrt7}{2}-1,
\]
attained at $\alpha=(3-\sqrt7)/2$.
\end{theorem}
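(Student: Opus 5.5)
We prove the guarantee by going through the return branches of \Cref{alg:warmup-one-repair} one at a time, writing $\eps\coloneqq\eps(\alpha)$. We then minimize $\eps(\alpha)$ separately.

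\textbf{The three early returns.} If $v\le\eps$, the output $(x,x)$ is an $\eps$-approximate equilibrium by \Cref{lem:low-saddle-gap}. The return $(a,a)$ is correct by \Cref{lem:symmetric-regret}, since the algorithm tests exactly $\operatorname{Reg}(a)\le\eps$. If $\rho\ge1-\eps$, \Cref{lem:minimax-operational} supplies $w\in\Delta(B)$ with $w^\top Ae_j\ge1-\eps$ for all $j\in B$. Because $w$ is supported on $B$, averaging over $j\sim w$ gives $w^\top Aw\ge1-\eps$. Every pure deviation earns at most $1$, so $\operatorname{Reg}(w)\le\eps$.

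\textbf{The repair branch.} Here $v>\eps$, so \Cref{cor:initial-certificate} gives
\[
M-a^\top Aa\le1-2\eps
\quad\text{and}\quad
a^\top Ae_j\ge\eps \text{ for all } j,
\]
where $M\coloneqq\max_h e_h^\top Aa$. By \Cref{lem:minimax-operational}, $z\in\Delta(B)$ satisfies $e_i^\top Az\le1-\eps$ for all $i\in B$.

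\emph{Upper bound on deviations.} For $i\in B$,
\[
e_i^\top Ap^+\le(1-\alpha)M+\alpha(1-\eps).
\]
For $i\notin B$ we use $e_i^\top Aa<M-\alpha\eps/(1-\alpha)$ and $e_i^\top Az\le1$, which gives the same bound. Hence $\max_i e_i^\top Ap^+\le(1-\alpha)M+\alpha(1-\eps)$.

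\emph{Lower bound on the self-payoff.} Expand
\[
(p^+)^\top Ap^+=(1-\alpha)^2a^\top Aa+\alpha(1-\alpha)\bigl(a^\top Az+z^\top Aa\bigr)+\alpha^2z^\top Az.
\]
We bound each term from below:
\begin{itemize}
\item $a^\top Az\ge\eps$, by the column floor;
\item $z^\top Aa\ge M-\alpha\eps/(1-\alpha)$, since $z$ is supported on $B$;
\item $z^\top Az\ge0$.
\end{itemize}

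\emph{Combining.} Subtracting, and writing $R\coloneqq M-a^\top Aa\le1-2\eps$, the $M$-terms collect as
\[
(1-\alpha)M-(1-\alpha)^2M-\alpha(1-\alpha)M=0 .
\]
What remains is
\[
\operatorname{Reg}(p^+)\le(1-\alpha)^2R+\alpha(1-\eps)+\alpha^2\eps-\alpha(1-\alpha)\eps .
\]
This is the main bookkeeping step, and the cancellation of the $M$-terms is what makes it work. Substituting $R\le1-2\eps$ and simplifying gives
\[
\operatorname{Reg}(p^+)\le1-\alpha+\alpha^2-2\eps(1-\alpha).
\]
Up to this point the bound holds for any $\eps$. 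Requiring the right-hand side to be at most $\eps$ is equivalent to $\eps(3-2\alpha)\ge1-\alpha+\alpha^2$, which holds with equality at $\eps=\eps(\alpha)$. \Cref{lem:symmetric-regret} then finishes this branch.

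\textbf{Minimizing $\eps(\alpha)$.} This is routine calculus. The derivative of $(1-\alpha+\alpha^2)/(3-2\alpha)$ has numerator
\[
(2\alpha-1)(3-2\alpha)+2(1-\alpha+\alpha^2)=-2\alpha^2+6\alpha-1 .
\]
Its root in $(0,1)$ is $\alpha^*=(3-\sqrt7)/2$. The numerator is negative for $\alpha<\alpha^*$ and positive for $\alpha^*<\alpha<1$, so $\alpha^*$ is the minimizer. At a critical point, $\eps=N'/D'=(2\alpha^*-1)/(-2)$. With $2\alpha^*=3-\sqrt7$ this gives $\eps(\alpha^*)=(\sqrt7-2)/2=\sqrt7/2-1$.
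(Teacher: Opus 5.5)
Your proof is correct and follows the paper's argument essentially step for step. It uses the same early-return checks, the same matching of the in-$B$ and out-of-$B$ deviation bounds, the same three lower bounds in the self-payoff expansion, and the same substitution $R\le 1-2\eps$. The only cosmetic difference is that you evaluate $\eps(\alpha^*)$ via the critical-point identity $N/D=N'/D'$ rather than by direct substitution, which is a tidy shortcut.
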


\begin{proof}
Fix $\alpha\in(0,1)$ and write $\eps\coloneqq\eps(\alpha)$.
We first verify the branches in which the algorithm returns a
strategy without performing a repair. If the saddle-gap game
$\mathcal S$ has value at most $\eps$, the output $(x,x)$ is
correct by \Cref{lem:low-saddle-gap}. Otherwise, the algorithm
checks the regret of the first marginal $a$ of an optimal row
strategy of $\mathcal S$. If this regret is at most $\eps$,
it returns $(a,a)$, which is correct by
\Cref{lem:symmetric-regret}.

The algorithm next forms the near-best-response set
\[
    B=\left\{i\in[n]:
    e_i^\top A a\ge\max_h e_h^\top A a
    -\frac{\alpha\eps}{1-\alpha}\right\}
\]
and solves the restricted zero-sum game $\mathcal B$ on
$B\times B$. If the value of $\mathcal B$ is at least
$1-\eps$, an optimal row strategy $w\in\Delta(B)$ satisfies
\[
    w^\top A e_j\ge1-\eps\qquad\forall j\in B.
\]
Since $w$ is supported on $B$,
\[
    w^\top A w
    =\sum_{j\in B}w_j w^\top A e_j
    \ge1-\eps.
\]
Every pure deviation in $\mathcal G$ earns at most one, so
$\operatorname{Reg}(w)\le\eps$. Hence the output $(w,w)$ is
also a symmetric $\eps$-approximate Nash equilibrium.

It remains to analyze the repair branch, in which the value of
$\mathcal B$ is below $1-\eps$. An optimal column strategy
$z\in\Delta(B)$ then satisfies
\[
    e_i^\top A z\le1-\eps\qquad\forall i\in B.
\]
The algorithm returns the symmetric profile $(p^+,p^+)$, where
\[
    p^+\coloneqq(1-\alpha)a+\alpha z.
\]
We first bound the best-response payoff against $p^+$.
For $i\in B$, the minimax guarantee for $z$ gives
\[
    e_i^\top A p^+
    &=(1-\alpha)e_i^\top A a+\alpha e_i^\top A z\notag\\
    &\le(1-\alpha)\max_h e_h^\top A a
        +\alpha(1-\eps)\notag\\
    &=\max_h e_h^\top A a
      -\alpha\left(\max_h e_h^\top A a+\eps-1\right).
      \numberthis{eq:warmup-inside}
\]
For $i\notin B$, the definition of $B$ gives
$e_i^\top A a<\max_h e_h^\top A a-\alpha\eps/(1-\alpha)$.
Although this row may earn as much as one against $z$, we have
\[
    e_i^\top A p^+
    &\le(1-\alpha)\left(\max_h e_h^\top A a
      -\frac{\alpha\eps}{1-\alpha}\right)+\alpha\notag\\
    &=\max_h e_h^\top A a
      -\alpha\left(\max_h e_h^\top A a+\eps-1\right).
      \numberthis{eq:warmup-outside}
\]
Thus, the restricted minimizer controls rows in $B$, while the
payoff gap in the definition of $B$ controls those outside it.
Both cases give the same upper bound, so
\[
\numberthis{eq:warmup-br}
    \max_i e_i^\top A p^+
    \le\max_h e_h^\top A a
    -\alpha\left(\max_h e_h^\top A a+\eps-1\right).
\]

We next lower-bound the self-payoff of $p^+$. Because $z$ is
supported on $B$, every row in its support is a near-best
response to $a$. Consequently,
\[
    z^\top A a
    =\sum_{i\in B}z_i e_i^\top A a
    \ge\max_h e_h^\top A a
      -\frac{\alpha\eps}{1-\alpha}.
\]
The high saddle-gap certificate also gives
$a^\top A e_j\ge\eps$ for every $j\in[n]$ by
\Cref{cor:initial-certificate}. Since $z\in\Delta(B)$,
this implies
\[
    a^\top A z\ge\eps.
\]
Expanding the self-payoff and using $z^\top A z\ge0$,
we obtain
\[
    (p^+)^\top A p^+
    &=(1-\alpha)^2a^\top A a
      +\alpha(1-\alpha)(a^\top A z+z^\top A a)
      +\alpha^2z^\top A z\notag\\
    &\ge(1-\alpha)^2a^\top A a
      +\alpha(1-\alpha)\left(\eps
        +\max_h e_h^\top A a
        -\frac{\alpha\eps}{1-\alpha}\right).
      \numberthis{eq:warmup-self}
\]
Subtracting \eqref{eq:warmup-self} from
\eqref{eq:warmup-br} gives
\[
    \operatorname{Reg}(p^+)
    &=\max_i e_i^\top A p^+-(p^+)^\top A p^+\notag\\
    &\le(1-\alpha)^2
      \left(\max_h e_h^\top A a-a^\top A a\right)\notag\\
    &\quad+\alpha(1-\eps)+\alpha^2\eps
      -\alpha(1-\alpha)\eps.
      \numberthis{eq:warmup-regret-repair}
\]
We are in the high saddle-gap branch, so
\Cref{cor:initial-certificate} gives $a^\top A a>2\eps$.
Since every pure payoff is at most one,
\[
    \max_h e_h^\top A a-a^\top A a\le1-2\eps.
\]
Substituting this bound into
\eqref{eq:warmup-regret-repair}, we get
\[
    \operatorname{Reg}(p^+)
    &\le(1-\alpha)^2(1-2\eps)
      +\alpha(1-\eps)+\alpha^2\eps
      -\alpha(1-\alpha)\eps\\
    &=1-\alpha+\alpha^2-2\eps(1-\alpha).
\]
By the definition of $\eps(\alpha)$ in
\eqref{eq:eps-alpha},
\[
    1-\alpha+\alpha^2-2\eps(\alpha)(1-\alpha)
    =\eps(\alpha).
\]
Hence $\operatorname{Reg}(p^+)\le\eps(\alpha)$, and
$(p^+,p^+)$ is a symmetric $\eps(\alpha)$-approximate
Nash equilibrium. This proves the approximation guarantee
in every branch of the algorithm.

We now optimize the repair weight. The bound just proved is
at most $\eps$ whenever
\[
    1-\alpha+\alpha^2-2\eps(1-\alpha)\le\eps,
\]
or equivalently,
\[
    \eps\ge\frac{1-\alpha+\alpha^2}{3-2\alpha}
    =\eps(\alpha).
\]
Differentiating gives
\[
    \eps'(\alpha)
    =\frac{-1+6\alpha-2\alpha^2}{(3-2\alpha)^2}.
\]
The unique critical point in $(0,1)$ is
$
    \alpha=(3-\sqrt7)/2$. The derivative is negative before this point and positive
after it, so the critical point minimizes $\eps(\alpha)$. Substitution gives
$
    \eps\left((3-\sqrt7)/2\right)
    =\frac{\sqrt7}{2}-1.
$
This completes the proof.
\end{proof}

The one-repair analysis uses the lower bound $a^\top A z\ge\eps$
to control the self-payoff of $p^+$. The update also retains
part of the initial guarantee against every pure column.
Indeed, for every $j\in[n]$,
\[
    (p^+)^\top A e_j
    &=(1-\alpha)a^\top A e_j+\alpha z^\top A e_j\notag\\
    &\ge(1-\alpha)a^\top A e_j
    \ge(1-\alpha)\eps,
\]
where the inequalities use nonnegativity of the payoffs and
\Cref{cor:initial-certificate}. The amortized algorithm
tracks this surviving guarantee and uses it in subsequent repairs.

\begin{algorithm}[!ht]
\caption{Saddle-Gap Algorithm with Amortized Repairs}
\label{alg:amortized-saddle-gap-repair}
\begin{algorithmic}[1]
\Require A symmetric game $\mathcal G=(A,A^\top)$ with
$A\in[0,1]^{n\times n}$ and integers $K\ge1$, $m\ge2$

\State Set $\eps\coloneqq\eps_{K,m}$.

\State Solve the saddle-gap game $\mathcal S$ with payoff matrix $H$
and value
\[
    v
    =\min_{x\in\Delta_n}
      \max_{(i,j)\in[n]^2}
      \sum_k x_k(A_{ik}-A_{kj})
    =\max_{\sigma\in\Delta_{n^2}}
      \min_{k\in[n]}
      \sum_{i,j}\sigma_{ij}(A_{ik}-A_{kj}).
\]

\State Let $x$ and $\sigma$ be optimal column and row strategies
of $\mathcal S$, respectively, and let $a,b$ be the marginals
of $\sigma$.

\If{$v\le\eps$}
    \State \Return $(x,x)$ \label{line:init}
\EndIf

\State Set $p_0\coloneqq a$.

\For{$t=0,1,\ldots,K-1$}

    \If{$\max_{i\in[n]}e_i^\top A p_t-p_t^\top A p_t\le\eps$}
        \State \Return $(p_t,p_t)$ \label{line:explicit_regret}
    \EndIf

    \State Define the near-best-response set
    \[
        B_t\coloneqq\left\{i\in[n]:
        e_i^\top A p_t\ge
        \max_h e_h^\top A p_t-
        \frac{\eps}{m+t-1}
        \right\}.
    \]

    \State Solve the restricted zero-sum game $\mathcal B_t$ with
    row and column action sets $B_t$ and payoff matrix
    $A_{B_t,B_t}$. Its value is
    \[
        \rho_t
        =\min_{z\in\Delta(B_t)}
          \max_{i\in B_t}e_i^\top A z
        =\max_{w\in\Delta(B_t)}
          \min_{j\in B_t}w^\top A e_j.
    \]

    \If{$\rho_t\ge1-\eps$}
        \State Compute $w_t\in\Delta(B_t)$ satisfying
        \label{line:w_cal}
        \[
            w_t^\top A e_j\ge1-\eps
            \qquad\forall j\in B_t.
        \]
        \State \Return $(w_t,w_t)$ \label{line:w_out}
    \Else
        \State Compute $z_t\in\Delta(B_t)$ satisfying
        \[
            e_i^\top A z_t\le1-\eps
            \qquad\forall i\in B_t.
        \]
        \State Set
        \[
            p_{t+1}\coloneqq
            \left(1-\frac1{m+t}\right)p_t
            +\frac1{m+t}z_t.
        \]
    \EndIf

\EndFor

\State \Return $(p_K,p_K)$

\end{algorithmic}
\end{algorithm}

\section{The Saddle-Gap Algorithm with Amortized Repairs}
\label{sec:amortized-algorithm}

The single-repair analysis shows that each update preserves part
of the payoff guarantee obtained from $\mathcal S$. We now use
this property to perform a sequence of repairs, carrying the
remaining guarantee from one iteration to the next.

Fix integers $K\ge1$ and $m\ge2$, and set
$\eps\coloneqq\eps_{K,m}$ as in \eqref{eq:epsilon-K-m}.
As in the single-repair algorithm, we first solve the saddle-gap
game $\mathcal S$. If its value $v$ is at most $\eps$, an optimal
column strategy $x$ yields a symmetric $\eps$-approximate Nash
equilibrium. Otherwise, we initialize $p_0\coloneqq a$, where
$a$ is the first marginal of an optimal row strategy of
$\mathcal S$. By \Cref{cor:initial-certificate}, this
initial strategy satisfies
$\operatorname{Reg}(p_0)\le1-2\eps$ and
$p_0^\top A e_j\ge\eps$ for every $j\in[n]$.

The algorithm performs at most $K$ repairs, using the weights
\[
\numberthis{eq:weights}
    \alpha_t\coloneqq\frac1{m+t},
    \qquad t=0,1,\ldots,K-1.
\]
The corresponding near-best-response window at iteration $t$
has width
\[
\numberthis{eq:window}
    \frac{\alpha_t\eps}{1-\alpha_t}
    =\frac{\eps}{m+t-1}.
\]

At each iteration, the algorithm first checks whether
$\operatorname{Reg}(p_t)\le\eps$, in which case it returns
$(p_t,p_t)$. Otherwise, it forms the near-best-response set
$B_t$ using the window in \eqref{eq:window} and solves the
restricted zero-sum game $\mathcal B_t$ with payoff matrix
$A_{B_t,B_t}$ and value $\rho_t$.

If $\rho_t\ge1-\eps$, an optimal row strategy
$w_t\in\Delta(B_t)$ guarantees payoff at least $1-\eps$
against every pure column in $B_t$. Since $w_t$ is supported
on $B_t$, the algorithm can return the symmetric
$\eps$-approximate Nash equilibrium $(w_t,w_t)$.
If $\rho_t<1-\eps$, an optimal column strategy
$z_t\in\Delta(B_t)$ guarantees
$e_i^\top A z_t\le1-\eps$ for every $i\in B_t$.
The algorithm uses $z_t$ as a repair direction and updates
\[
    p_{t+1}\coloneqq(1-\alpha_t)p_t+\alpha_t z_t.
\]
If all $K$ repairs are completed, the algorithm returns
$(p_K,p_K)$.

The decreasing repair weights preserve a portion of the
column-payoff guarantee at each step. The next section shows
how this guarantee controls the regret across successive
repairs and establishes the approximation bound
$\eps_{K,m}$.

The complete procedure is given in
\Cref{alg:amortized-saddle-gap-repair}. It solves $\mathcal S$,
with $n^2$ row actions and $n$ column actions, followed by
at most $K$ restricted zero-sum games $\mathcal B_t$,
each with at most $n$ actions per player.

\section{Analysis of the Amortized Algorithm}\label{sec:amortized-analysis}

We begin by verifying the early-return branches of \Cref{alg:amortized-saddle-gap-repair} and proving a
one-step repair inequality. We then apply this inequality across
successive iterations, solve the resulting recurrences, and use
these bounds to establish the main theorem and its corollaries.

\subsection{Early Termination and One-Step Repair}

The low-value branch of $\mathcal S$ in
Line~\ref{line:init} of
\Cref{alg:amortized-saddle-gap-repair} is correct by
\Cref{lem:low-saddle-gap}. Similarly, the explicit regret
check in Line~\ref{line:explicit_regret} is correct by
\Cref{lem:symmetric-regret}. We therefore consider the
remaining early-return branch, in which a restricted zero-sum
game has value at least $1-\eps$.

\begin{lemma}\label{lem:restricted-high-branch}
Suppose \Cref{alg:amortized-saddle-gap-repair} reaches
Line~\ref{line:w_out} at iteration $t$, with
$\rho_t\ge1-\eps$. Then the strategy $w_t$ computed in
Line~\ref{line:w_cal} yields a symmetric
$\eps$-approximate Nash equilibrium $(w_t,w_t)$ of
$\mathcal G$.
\end{lemma}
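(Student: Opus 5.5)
The argument is the same as the $w$-branch in the proof of \Cref{thm:one-repair}. The only change is that the restricted game now lives on $B_t$ instead of $B$. We proceed in three short steps and finish with \Cref{lem:symmetric-regret}.

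\emph{Step 1 (existence of $w_t$).} Line~\ref{line:w_cal} is reached only when the value $\rho_t$ of $\mathcal B_t$, with payoff matrix $A_{B_t,B_t}$, satisfies $\rho_t\ge1-\eps$. We apply \Cref{lem:minimax-operational} with $T\coloneqq1-\eps$. It gives a row strategy $w_t\in\Delta(B_t)$, computable by linear programming, with
\[
    w_t^\top A e_j\ge1-\eps\qquad\forall j\in B_t.
\]

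\emph{Step 2 (self-payoff).} Since $w_t$ is supported on $B_t$, we can average this guarantee over $j\sim w_t$:
\[
    w_t^\top A w_t=\sum_{j\in B_t}(w_t)_j\,w_t^\top A e_j\ge1-\eps.
\]
This is the only delicate point. The guarantee from $\mathcal B_t$ holds only against columns in $B_t$, not against all of $[n]$. Averaging is legitimate exactly because the row and column action sets of $\mathcal B_t$ coincide, so the support of $w_t$ lies among the columns that are guaranteed.

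\emph{Step 3 (conclusion).} Every entry of $A$ is at most one, so $\max_i e_i^\top A w_t\le1$. Hence
\[
    \operatorname{Reg}(w_t)\le1-(1-\eps)=\eps.
\]
By \Cref{lem:symmetric-regret}, $(w_t,w_t)$ is a symmetric $\eps$-approximate Nash equilibrium of $\mathcal G$.

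No step poses a real obstacle. The proof uses nothing about $p_t$, the window width, or the value of $\eps_{K,m}$.
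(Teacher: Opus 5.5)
Your proposal is correct and matches the paper's proof step for step. You invoke \Cref{lem:minimax-operational} to obtain $w_t\in\Delta(B_t)$ with $w_t^\top A e_j\ge 1-\eps$ for all $j\in B_t$, then average over the support of $w_t$ to get $w_t^\top A w_t\ge 1-\eps$, bound every pure deviation payoff by one, and conclude via \Cref{lem:symmetric-regret}.
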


\begin{proof}
By \Cref{lem:minimax-operational}, the strategy
$w_t\in\Delta(B_t)$ satisfies
\[
    w_t^\top A e_j\ge1-\eps
    \qquad\forall j\in B_t.
\]
Since $w_t$ is supported on $B_t$, its self-payoff satisfies
\[
    w_t^\top A w_t
    =\sum_{j\in B_t}(w_t)_j w_t^\top A e_j
    \ge1-\eps.
\]
Every pure deviation in $\mathcal G$ has payoff at most one.
Therefore,
\[
    \max_{i\in[n]}e_i^\top A w_t-w_t^\top A w_t
    \le1-(1-\eps)=\eps.
\]
The claim follows from \Cref{lem:symmetric-regret}.
\end{proof}

We now analyze the repair branch. For a current strategy
$p$, we use $R$ to upper-bound its regret and $L$ to
lower-bound its payoff against every pure column strategy.
The following lemma establishes how these two bounds
change after one repair.

\begin{lemma}\label{lem:one-step-repair}
Fix $\eps\in[0,1]$ and $\alpha\in(0,1)$. Let
$p\in\Delta_n$ satisfy
\[
\numberthis{eq:repair-regret-input}
    \max_{i\in[n]}e_i^\top A p-p^\top A p\le R
\]
for some $R\ge0$, and
\[
    p^\top A e_j\ge L\qquad\forall j\in[n]
\]
for some $L\ge0$. Define the near-best-response set
\[
    B\coloneqq\left\{i\in[n]:
    e_i^\top A p\ge
    \max_h e_h^\top A p-
    \frac{\alpha\eps}{1-\alpha}
    \right\}.
\]
Suppose the restricted zero-sum game $\mathcal B$ with
row and column action sets $B$ and payoff matrix $A_{B,B}$
has value less than $1-\eps$. Let $z\in\Delta(B)$ be an
optimal column strategy of $\mathcal B$, so that
\[
    e_i^\top A z\le1-\eps
    \qquad\forall i\in B.
\]
Then the repaired strategy
$p^+\coloneqq(1-\alpha)p+\alpha z$ satisfies
\[
\numberthis{eq:repair-floor-output}
    (p^+)^\top A e_j\ge(1-\alpha)L
    \qquad\forall j\in[n],
\]
and
\[
\numberthis{eq:repair-regret-output}
    \max_{i\in[n]}e_i^\top A p^+-(p^+)^\top A p^+\le(1-\alpha)^2R+\alpha(1-\eps)+\alpha^2\eps-\alpha(1-\alpha)L.
\]
\end{lemma}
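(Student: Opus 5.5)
The plan is to mirror the single-repair computation in the proof of \Cref{thm:one-repair}, with the saddle-gap certificate replaced by the abstract budgets $R$ and $L$. Throughout, write $M\coloneqq\max_h e_h^\top A p$ for the best-response payoff against $p$. We will prove the two conclusions separately: first the floor bound \eqref{eq:repair-floor-output}, then an upper bound on the best-response payoff against $p^+$, then a lower bound on the self-payoff of $p^+$. Subtracting the last two gives \eqref{eq:repair-regret-output}.

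The floor bound is immediate. For every $j$, expand $(p^+)^\top A e_j=(1-\alpha)p^\top A e_j+\alpha z^\top A e_j$. The first term is at least $(1-\alpha)L$ by hypothesis, and the second is nonnegative because $A\ge0$.

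For the best-response bound, split the pure rows $i$ according to whether $i\in B$.
\begin{itemize}
\item If $i\in B$, use $e_i^\top A p\le M$ together with $e_i^\top A z\le 1-\eps$.
\item If $i\notin B$, use $e_i^\top A p< M-\alpha\eps/(1-\alpha)$ together with $e_i^\top A z\le 1$.
\end{itemize}
Exactly as in \eqref{eq:warmup-inside} and \eqref{eq:warmup-outside}, both cases give the common bound
\[
\max_i e_i^\top A p^+\le(1-\alpha)M+\alpha(1-\eps).
\]

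For the self-payoff, expand
\[
(p^+)^\top A p^+=(1-\alpha)^2p^\top A p+\alpha(1-\alpha)\bigl(p^\top A z+z^\top A p\bigr)+\alpha^2 z^\top A z.
\]
We bound the three pieces as follows.
\begin{itemize}
\item Drop $z^\top A z\ge0$.
\item Use $p^\top A z=\sum_j z_j\,p^\top A e_j\ge L$.
\item Since $z$ is supported on $B$, use $z^\top A p=\sum_{i\in B}z_i e_i^\top A p\ge M-\alpha\eps/(1-\alpha)$.
\end{itemize}
Subtracting this lower bound from the best-response bound, the $M$-terms combine to
\[
(1-\alpha)M-\alpha(1-\alpha)M=(1-\alpha)^2M.
\]
Together with $-(1-\alpha)^2p^\top A p$ this gives $(1-\alpha)^2\bigl(M-p^\top A p\bigr)\le(1-\alpha)^2R$, using \eqref{eq:repair-regret-input}. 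The window term contributes $+\alpha(1-\alpha)\cdot\alpha\eps/(1-\alpha)=\alpha^2\eps$, and the floor term contributes $-\alpha(1-\alpha)L$. Adding the remaining $\alpha(1-\eps)$ yields exactly \eqref{eq:repair-regret-output}.

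No step is a real obstacle; the only care needed is bookkeeping. The $(1-\alpha)^2$ coefficient on $R$ appears only after the $-\alpha(1-\alpha)M$ from the cross term is merged with the $(1-\alpha)M$ from the best-response bound, so the two $M$-terms must be combined before invoking $R$. The hypotheses $R,L\ge0$ and $\eps\in[0,1]$ are not essentially needed beyond ensuring that the quantities make sense.
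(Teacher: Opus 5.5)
Your proposal is correct and follows the paper's proof essentially step for step. You prove the floor bound by nonnegativity, bound $\max_i e_i^\top A p^+$ by splitting rows in and out of $B$ to get the common bound $(1-\alpha)M+\alpha(1-\eps)$, and lower-bound the self-payoff using $z^\top A z\ge0$, $p^\top A z\ge L$, and $z^\top A p\ge M-\alpha\eps/(1-\alpha)$; subtracting and merging the $M$-terms into $(1-\alpha)^2M$ then gives exactly the stated inequality.
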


\begin{proof}
We first establish the column-payoff bound of \eqref{eq:repair-floor-output}. For every
$j\in[n]$, nonnegativity of the payoffs gives
\[
    (p^+)^\top A e_j
    =(1-\alpha)p^\top A e_j+\alpha z^\top A e_j \ge(1-\alpha)L.
\]
We next upper-bound the best-response payoff against
$p^+$ in \eqref{eq:repair-regret-output}. Consider a row $i\in B$. The restricted minimizer
satisfies $e_i^\top A z\le1-\eps$, so
\[
    e_i^\top A p^+
    &=(1-\alpha)e_i^\top A p+\alpha e_i^\top A z\notag\\
    &\le(1-\alpha)\max_h e_h^\top A p
       +\alpha(1-\eps)\notag\\
    &=\max_h e_h^\top A p
      -\alpha\left(\max_h e_h^\top A p+\eps-1\right).
      \numberthis{eq:repair-inside}
\]
Now consider a row $i\notin B$. By the definition of $B$,
\[
    e_i^\top A p
    <\max_h e_h^\top A p-
      \frac{\alpha\eps}{1-\alpha}.
\]
Although this row may obtain payoff as large as one
against $z$, we have
\[
    e_i^\top A p^+
    &=(1-\alpha)e_i^\top A p+\alpha e_i^\top A z\notag\\
    &\le(1-\alpha)\left(\max_h e_h^\top A p
      -\frac{\alpha\eps}{1-\alpha}\right)+\alpha\notag\\
    &=\max_h e_h^\top A p
      -\alpha\left(\max_h e_h^\top A p+\eps-1\right).
      \numberthis{eq:repair-outside}
\]
The two cases yield the same upper bound. For rows in
$B$, this bound follows from the minimax guarantee for
$z$. For rows outside $B$, it follows from their payoff
gap against $p$. This explains the choice of the
near-best-response window $\alpha\eps/(1-\alpha)$.
Combining \eqref{eq:repair-inside} and
\eqref{eq:repair-outside}, we obtain
\[
\numberthis{eq:repair-br-bound}
    \max_{i\in[n]}e_i^\top A p^+
    \le\max_h e_h^\top A p
       -\alpha\left(\max_h e_h^\top A p+\eps-1\right).
\]
We now lower-bound the self-payoff of $p^+$.
Since $z$ is supported on $B$, every row in its
support is a near-best response to $p$. Therefore,
\[
    z^\top A p
    &=\sum_{i\in B}z_i e_i^\top A p\\
    &\ge\max_h e_h^\top A p-
       \frac{\alpha\eps}{1-\alpha}.
\]
The column-payoff bound on $p$ gives a lower bound
for the other cross term. Since $z$ is a probability
distribution and $p^\top A e_j\ge L$ for every $j$,
\[
    p^\top A z
    =\sum_{j\in B}z_j p^\top A e_j
    \ge L.
\]

Expanding the self-payoff of $p^+$, we have
\[
    (p^+)^\top A p^+
    &=(1-\alpha)^2p^\top A p
      +\alpha(1-\alpha)(p^\top A z+z^\top A p)
      +\alpha^2z^\top A z\notag\\
    &\ge(1-\alpha)^2p^\top A p
      +\alpha(1-\alpha)\left(
        L+\max_h e_h^\top A p
        -\frac{\alpha\eps}{1-\alpha}
      \right).
      \numberthis{eq:repair-self-bound}
\]
Here we used $z^\top A z\ge0$, which follows from the
nonnegativity of the entries of $A$.

Subtracting \eqref{eq:repair-self-bound} from
\eqref{eq:repair-br-bound}, we obtain
\[
    \max_{i\in[n]}e_i^\top A p^+-(p^+)^\top A p^+
    &\le(1-\alpha)^2
      \left(\max_h e_h^\top A p-p^\top A p\right)\notag\\
    &\quad+\alpha(1-\eps)+\alpha^2\eps
      -\alpha(1-\alpha)L\notag\\
    &\le(1-\alpha)^2R+\alpha(1-\eps)+\alpha^2\eps
      -\alpha(1-\alpha)L,
\]
where the last inequality follows from
\eqref{eq:repair-regret-input}. This proves
\eqref{eq:repair-regret-output}.
\end{proof}

The regret bound in \eqref{eq:repair-regret-output}
retains a factor $(1-\alpha)^2$ of the previous regret
budget. The terms $\alpha(1-\eps)$ and $\alpha^2\eps$
arise from the best-response bound and the
near-best-response window, while
$-\alpha(1-\alpha)L$ captures the contribution of
the column-payoff floor to the new self-payoff.
By \eqref{eq:repair-floor-output}, the repaired strategy
also retains a column-payoff floor of at least
$(1-\alpha)L$. The next subsection uses these two
bounds across successive repairs.

\subsection{Amortized Regret Analysis}

We now apply \Cref{lem:one-step-repair} to the sequence of
repairs in \Cref{alg:amortized-saddle-gap-repair}. Suppose the
saddle-gap game has value $v>\eps$, so the algorithm
initializes $p_0=a$. By \Cref{cor:initial-certificate},
\[
    \max_{i\in[n]}e_i^\top A p_0-p_0^\top A p_0\le1-2\eps,
    \qquad p_0^\top A e_j\ge\eps\quad\forall j\in[n].
\]

We track the regret of $p_t$ through an upper bound $R_t$
and its payoff against every pure column through a lower
bound $L_t$. These are quantities used in the analysis.
We initialize
\[
\numberthis{eq:budgets-initial}
    R_0\coloneqq1-2\eps,\qquad L_0\coloneqq\eps,
\]
and, for $t=0,1,\ldots,K-1$, define
\[
    R_{t+1}&\coloneqq(1-\alpha_t)^2R_t+\alpha_t(1-\eps)+\alpha_t^2\eps-\alpha_t(1-\alpha_t)L_t,
    \numberthis{eq:budget-recurrence_R}\\
    L_{t+1}&\coloneqq(1-\alpha_t)L_t.
    \numberthis{eq:budget-recurrence_L}
\]
Here $\alpha_t=1/(m+t)$, as defined in \eqref{eq:weights}.
These recurrences follow from the one-step repair bounds.
The next lemma verifies that they remain valid throughout
the algorithm.

\begin{lemma}\label{lem:invariant}
For every $t=0,1,\ldots,K$, if the algorithm reaches
the strategy $p_t$, then
\[
    \max_{i\in[n]}e_i^\top A p_t-p_t^\top A p_t\le R_t,
    \qquad p_t^\top A e_j\ge L_t\quad\forall j\in[n].
\]
\end{lemma}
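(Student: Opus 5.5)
We argue by induction on $t$. The base case $t=0$ is exactly \Cref{cor:initial-certificate}. The algorithm reaches $p_0=a$ only when it does not return at Line~\ref{line:init}, that is, only when $v>\eps$. In that case the corollary gives
\[
\operatorname{Reg}(p_0)\le1-2\eps=R_0,
\qquad
p_0^\top A e_j\ge\eps=L_0\quad\forall j\in[n].
\]

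For the inductive step, suppose the algorithm reaches $p_{t+1}$ for some $t\le K-1$. The only way $p_{t+1}$ is defined is through the repair branch at iteration $t$. Hence the algorithm reached $p_t$, and by the induction hypothesis $p_t$ satisfies the regret bound $R_t$ and the column floor $L_t$. It also did not return at Line~\ref{line:explicit_regret}, did not return at Line~\ref{line:w_out}, and so $\rho_t<1-\eps$. By \Cref{lem:minimax-operational}, the computed $z_t\in\Delta(B_t)$ satisfies $e_i^\top A z_t\le1-\eps$ for all $i\in B_t$.

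We now invoke \Cref{lem:one-step-repair} with
\[
p=p_t,\qquad R=R_t,\qquad L=L_t,\qquad \alpha=\alpha_t=1/(m+t).
\]
Two bookkeeping checks are needed.
\begin{itemize}
\item $\alpha_t\in(0,1)$, since $m\ge2$.
\item The window in $B_t$ matches the lemma's window, since $\alpha_t\eps/(1-\alpha_t)=\eps/(m+t-1)$ by \eqref{eq:window}.
\end{itemize}
The lemma then outputs exactly the recurrences \eqref{eq:budget-recurrence_R} and \eqref{eq:budget-recurrence_L} for $p_{t+1}=(1-\alpha_t)p_t+\alpha_t z_t$.

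The main obstacle is the lemma's side hypotheses $R\ge0$, $L\ge0$ and $\eps\in[0,1]$.
\begin{itemize}
\item $L_t\ge0$ is immediate, since $L_t=\eps\prod_{s<t}(1-\alpha_s)$ and $\eps\ge0$.
\item $\eps_{K,m}\in[0,1]$ follows from the explicit formula \eqref{eq:epsilon-K-m}. For $m\ge2$ and $K\ge1$ the numerator and denominator are positive, and a direct comparison shows the ratio is below $1/2$.
\item $R_t\ge0$ is the delicate point. I would avoid proving it by noting that the proof of \Cref{lem:one-step-repair} never uses $R\ge0$. It only chains $\operatorname{Reg}(p)\le R$ through the coefficient $(1-\alpha)^2\ge0$, so the conclusion holds for any real $R$. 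Alternatively, $R_t\ge\operatorname{Reg}(p_t)\ge0$ holds inductively, because regret is always nonnegative: $\max_i e_i^\top Ap_t\ge p_t^\top Ap_t$. Either remark closes the gap.
\end{itemize}

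With these checks in place, the induction proves \Cref{lem:invariant} for all $t=0,1,\ldots,K$.
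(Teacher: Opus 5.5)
Your proof is correct and follows the paper's argument. It inducts on $t$, takes the base case from \Cref{cor:initial-certificate}, and gets the inductive step by applying \Cref{lem:one-step-repair} with $\alpha=\alpha_t=1/(m+t)$, whose window matches $\eps/(m+t-1)$. The paper leaves the side hypotheses implicit, and your extra checks of them are all sound: $\alpha_t\in(0,1)$, $L_t\ge0$, $\eps_{K,m}\in[0,1]$, and that $R\ge0$ is either unused in the lemma's proof or follows from nonnegativity of regret.
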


\begin{proof}
At $t=0$, the claim follows from
\eqref{eq:budgets-initial} and
\Cref{cor:initial-certificate}.

Suppose the claim holds at some $t<K$ and the algorithm
performs a repair. The repair weight is
$\alpha_t=1/(m+t)$, and by \eqref{eq:window}, the
near-best-response window satisfies
\[
    \frac{\alpha_t\eps}{1-\alpha_t}
    =\frac{\eps}{m+t-1}.
\]
Thus, \Cref{lem:one-step-repair} applies to the current
strategy $p_t$ with regret budget $R_t$ and column-floor
budget $L_t$. Its two conclusions give
\[
    \max_{i\in[n]}e_i^\top A p_{t+1}-p_{t+1}^\top A p_{t+1}
    \le R_{t+1},
    \qquad p_{t+1}^\top A e_j\ge L_{t+1}\quad\forall j\in[n],
\]
where $R_{t+1}$ and $L_{t+1}$ are exactly the quantities
defined in \eqref{eq:budget-recurrence_R} and
\eqref{eq:budget-recurrence_L}. The claim follows by
induction.
\end{proof}

We next solve the budget recurrences for $L_t$ and $R_t$. The floor recurrence
can be solved independently of the regret recurrence.
Using $L_0=\eps$ and $\alpha_t=1/(m+t)$, we obtain
\[
\numberthis{eq:floor-closed}
    L_t=\eps\prod_{r=0}^{t-1}\frac{m+r-1}{m+r}
    =\eps\frac{m-1}{m+t-1}.
\]
In particular, $(m+t-1)L_t=\eps(m-1)$ is independent
of $t$. This identity allows the regret recurrence to
telescope after scaling.

\begin{lemma}\label{lem:closed-form}
For the budgets defined in \eqref{eq:budgets-initial},
\eqref{eq:budget-recurrence_R}, and
\eqref{eq:budget-recurrence_L}, the choice
$\eps=\eps_{K,m}$ from \eqref{eq:epsilon-K-m} satisfies $
    R_K=\eps$.
\end{lemma}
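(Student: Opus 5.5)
We solve the recurrences explicitly. The floor recurrence is already solved in \eqref{eq:floor-closed}, giving $L_t=\eps(m-1)/(m+t-1)$. We substitute this into \eqref{eq:budget-recurrence_R} with $\alpha_t=1/(m+t)$ and multiply both sides by $(m+t)^2$. Since $(m+t)^2(1-\alpha_t)^2=(m+t-1)^2$ and $(m+t)^2\alpha_t(1-\alpha_t)L_t=(m+t-1)L_t=\eps(m-1)$, the recurrence becomes
\[
(m+t)^2R_{t+1}=(m+t-1)^2R_t+(m+t)(1-\eps)+\eps-\eps(m-1),
\]
that is,
\[
(m+t)^2R_{t+1}=(m+t-1)^2R_t+(m+t)-\eps(2m+t-2),
\]
which matches the identity announced in the technical overview.

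Next we sum over $t=0,\ldots,K-1$. The scaled terms telescope and leave
\[
(m+K-1)^2R_K=(m-1)^2(1-2\eps)+\sum_{t=0}^{K-1}(m+t)-\eps\sum_{t=0}^{K-1}(2m+t-2).
\]
The two sums are elementary:
\[
\sum_{t=0}^{K-1}(m+t)=Km+\tfrac{K(K-1)}{2},\qquad
\sum_{t=0}^{K-1}(2m+t-2)=K(2m-2)+\tfrac{K(K-1)}{2}.
\]
Hence $(m+K-1)^2R_K=C-D\eps$, where
\[
C=(m-1)^2+Km+\tfrac{K(K-1)}{2},\qquad
D=2(m-1)^2+2K(m-1)+\tfrac{K(K-1)}{2}.
\]
Setting $R_K=\eps$ is equivalent to the linear equation $C=\eps\bigl(D+(m+K-1)^2\bigr)$. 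We solve it for $\eps$, multiply numerator and denominator by $2$, and expand.

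For the numerator,
\[
2C=2m^2-4m+2+2Km+K^2-K=2m^2+2(K-2)m+K(K-1)+2.
\]
For the denominator we expand $(m+K-1)^2=m^2+2m(K-1)+(K-1)^2$. Then
\[
2D+2(m+K-1)^2=6m^2+(8K-12)m+\bigl[6-4K+K^2-K+2(K-1)^2\bigr].
\]
The bracket equals $3K^2-9K+6=3(K-1)(K-2)$. Both expressions agree with \eqref{eq:epsilon-K-m}, so $\eps=\eps_{K,m}$ gives $R_K=\eps$.

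No step here is a genuine obstacle; the work is bookkeeping. Two points need care. First, the recurrence is linear in $\eps$, so we must treat $\eps$ as a fixed parameter throughout and solve the final linear equation only at the end. Second, the denominator $D+(m+K-1)^2$ must be positive, which holds since $m\ge2$ and $K\ge1$. The constant-term algebra in the last expansion is where an arithmetic slip is most likely, so we would recheck it at a small case such as $K=1$. There, $\eps_{1,m}=(2m^2-2m+2)/(6m^2-4m)$, which matches $\eps(\alpha)$ at $\alpha=1/m$ from \eqref{eq:eps-alpha}.
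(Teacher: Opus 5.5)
Your proposal is correct and follows the paper's proof essentially line for line: you scale the recurrence by $(m+t)^2$, use $(m+t-1)L_t=\eps(m-1)$ to obtain the telescoping identity, sum over $t$, and solve the resulting linear equation in $\eps$. One small slip is that the constant bracket in your denominator should read $4-4K+K^2-K+2(K-1)^2$, since the $4$ comes from $4(m-1)^2$; as written with $6$ it evaluates to $3K^2-9K+8$, although the value $3K^2-9K+6=3(K-1)(K-2)$ that you then state is the correct one.
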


\begin{proof}
Substituting $\alpha_t=1/(m+t)$ into
\eqref{eq:budget-recurrence_R} and multiplying both sides
by $(m+t)^2$, we obtain
\[
    (m+t)^2R_{t+1}
    &=(m+t)^2\left(1-\frac1{m+t}\right)^2R_t
      +(m+t)(1-\eps)+\eps -(m+t)^2\frac1{m+t}
      \left(1-\frac1{m+t}\right)L_t\notag\\
    &=(m+t-1)^2R_t+(m+t)(1-\eps)+\eps-(m+t-1)L_t.
\]
By \eqref{eq:floor-closed},
$(m+t-1)L_t=\eps(m-1)$. Substituting this identity
gives
\[
    (m+t)^2R_{t+1}
    &=(m+t-1)^2R_t+(m+t)(1-\eps)+\eps-\eps(m-1)\notag\\
    &=(m+t-1)^2R_t+(m+t)-\eps(2m+t-2).
    \numberthis{eq:telescoping}
\]

To make the telescoping explicit, define
\[
    S_t\coloneqq(m+t-1)^2R_t.
\]
Equation~\eqref{eq:telescoping} becomes
\[
    S_{t+1}=S_t+(m+t)-\eps(2m+t-2).
\]
Summing over $t=0,1,\ldots,K-1$, we obtain
\[
    S_K=S_0+\sum_{t=0}^{K-1}(m+t)
        -\eps\sum_{t=0}^{K-1}(2m+t-2).
\]
The two sums satisfy
\[
    \sum_{t=0}^{K-1}(m+t)&=Km+\frac{K(K-1)}2,\\
    \sum_{t=0}^{K-1}(2m+t-2)&=K(2m-2)+\frac{K(K-1)}2.
\]
Since $S_0=(m-1)^2R_0=(m-1)^2(1-2\eps)$ and
$S_K=(m+K-1)^2R_K$, we conclude that
\[
    (m+K-1)^2R_K
    &=(m-1)^2(1-2\eps)+Km+\frac{K(K-1)}2 -\eps\left(K(2m-2)+\frac{K(K-1)}2\right).
    \numberthis{eq:RK-expression}
\]

We determine the approximation factor by requiring the
final regret budget to equal the target guarantee,
$R_K=\eps$. Substituting this equality into
\eqref{eq:RK-expression} gives
\[
    (m+K-1)^2\eps
    &=(m-1)^2+Km+\frac{K(K-1)}2 -\eps\left(2(m-1)^2+K(2m-2)
      +\frac{K(K-1)}2\right).
\]
Collecting the terms involving $\eps$ and solving yields
\[
    \eps=
    \frac{(m-1)^2+Km+K(K-1)/2}
    {(m+K-1)^2+2(m-1)^2+K(2m-2)+K(K-1)/2}.
\]
Multiplying numerator and denominator by two and
simplifying, we obtain
\[
    \eps=
    \frac{2m^2+2(K-2)m+K(K-1)+2}
    {6m^2+(8K-12)m+3(K-1)(K-2)}
    =\eps_{K,m}.
\]
Thus, the choice of $\eps$ in \eqref{eq:epsilon-K-m}
indeed gives $R_K=\eps$, as claimed.
\end{proof}

\subsection{Proofs of the Main Results}

We now combine the preceding lemmas to prove the main theorem
and derive its two corollaries.

\begin{repeatthm}{thm:finite-main}[Main result]
For every pair of integers $K\ge1$ and $m\ge2$, there is a
polynomial-time algorithm that, given a symmetric bimatrix game
$\mathcal G=(A,A^\top)$ with $A\in[0,1]^{n\times n}$, computes
a symmetric $\eps_{K,m}$-approximate Nash equilibrium, where
\[
\eps_{K,m}=
\frac{2m^2+2(K-2)m+K(K-1)+2}
{6m^2+(8K-12)m+3(K-1)(K-2)} .
\]
The algorithm solves the auxiliary zero-sum game $\mathcal S$,
with $n^2$ row actions and $n$ column actions, followed by at
most $K$ restricted zero-sum games $\mathcal B_t$, each with
at most $n$ actions per player.
\end{repeatthm}

\begin{proof}
Write $\eps=\eps_{K,m}$. The algorithm can terminate in
one of three early-return branches or after completing
all $K$ repairs.

If it returns in Line~\ref{line:init} of
\Cref{alg:amortized-saddle-gap-repair}, the output is a
symmetric $\eps$-approximate Nash equilibrium by
\Cref{lem:low-saddle-gap}. The regret check in
Line~\ref{line:explicit_regret} is correct by
\Cref{lem:symmetric-regret}, and the restricted high-value
branch in Line~\ref{line:w_out} is correct by
\Cref{lem:restricted-high-branch}.

It remains to consider the case in which the algorithm
performs all $K$ repairs and returns $(p_K,p_K)$.
By \Cref{lem:invariant},
\[
    \max_{i\in[n]}e_i^\top A p_K-p_K^\top A p_K\le R_K.
\]
Since \Cref{lem:closed-form} establishes that
$R_K=\eps_{K,m}=\eps$, we have
\[
    \max_{i\in[n]}e_i^\top A p_K-p_K^\top A p_K\le\eps.
\]
Thus, \Cref{lem:symmetric-regret} implies that
$(p_K,p_K)$ is a symmetric $\eps$-approximate Nash
equilibrium.

Finally, the algorithm solves one zero-sum game
$\mathcal S$ with $n^2$ row actions and $n$ column actions,
followed by at most $K$ restricted zero-sum games
$\mathcal B_t$, each with at most $n$ actions per player.
All these games can be solved in polynomial time by
linear programming, and the remaining operations are
also polynomial-time computable. This establishes the
claimed running time.
\end{proof}

\begin{repeatcorollary}{cor:three-tenths}
Every symmetric bimatrix game with payoffs in $[0,1]$ has a
polynomial-time computable symmetric $3/10$-approximate Nash equilibrium.
\end{repeatcorollary}

\begin{proof}
Setting $K=13$ and $m=14$ in \eqref{eq:epsilon-K-m} gives
\[
    \eps_{13,14}
    =\frac{2\cdot14^2+2\cdot11\cdot14+13\cdot12+2}
          {6\cdot14^2+(8\cdot13-12)\cdot14+3\cdot12\cdot11}
    =\frac{858}{2860}=\frac3{10}.
\]
The claim follows from \Cref{thm:finite-main}.
\end{proof}

The finite guarantee in \eqref{eq:epsilon-K-m} also yields
an improved asymptotic approximation factor when $m$ grows
proportionally to $K$. We optimize this limiting factor
to obtain the following result.

\begin{repeatcorollary}{cor:limit}
For every fixed $\delta>0$, there is a polynomial-time algorithm
that computes a symmetric
$\left(1-\frac1{\sqrt2}+\delta\right)$-approximate Nash equilibrium
of every symmetric bimatrix game with payoffs in $[0,1]$.
\end{repeatcorollary}

\begin{proof}
Consider a sequence of integers $K,m$ such that
$K\to\infty$ and $m/K\to c$ for some constant $c>0$.
Dividing the numerator and denominator of
\eqref{eq:epsilon-K-m} by $K^2$ gives
\[
    \eps_{K,m}\longrightarrow
    g(c)\coloneqq\frac{2c^2+2c+1}{6c^2+8c+3}.
\]
We minimize $g$ over $c>0$. Differentiating yields
\[
    g'(c)=\frac{4c^2-2}{(6c^2+8c+3)^2}.
\]
Since the denominator is positive for $c>0$, the
derivative vanishes precisely when $4c^2-2=0$.
The unique positive critical point is $
    c=1/{\sqrt2}.$
Moreover, $g'(c)<0$ for $0<c<1/\sqrt2$ and
$g'(c)>0$ for $c>1/\sqrt2$. Thus, this critical
point is the global minimizer of $g$ over $c>0$.
Evaluating $g$ at this point gives
\[
    g\left(\frac1{\sqrt2}\right)
    =\frac{2+\sqrt2}{6+4\sqrt2}
    =1-\frac1{\sqrt2}.
\]
For each $K$, choose
\[
    m\coloneqq\left\lceil\frac{K}{\sqrt2}\right\rceil.
\]
Then $m/K\to1/\sqrt2$, and consequently
\[
    \eps_{K,m}\longrightarrow1-\frac1{\sqrt2}
    \qquad\text{as }K\to\infty.
\]
Hence, for every fixed $\delta>0$, we can choose
$K$ sufficiently large, with $m\ge2$, such that
\[
    \eps_{K,m}\le1-\frac1{\sqrt2}+\delta.
\]
For a fixed $\delta$, these choices of $K$ and $m$
are constants independent of the input size.
By \Cref{thm:finite-main}, the resulting algorithm
runs in polynomial time and computes a symmetric
$\left(1-\frac1{\sqrt2}+\delta\right)$-approximate
Nash equilibrium.
\end{proof}

\section{Conclusion}\label{sec:conclusion}

We have shown that the $1/3$ approximation barrier for
symmetric bimatrix games can be broken in polynomial time
using auxiliary zero-sum games as the only optimization
subroutines. For every fixed $\delta>0$, our algorithm
computes a symmetric
$\left(1-1/\sqrt2+\delta\right)$-approximate Nash equilibrium
for every symmetric bimatrix game. This improves upon the
best currently known polynomial-time approximation
guarantee for general bimatrix games.

Our analysis achieves the limiting factor $1-1/\sqrt2$
for the repair schedule $\alpha_t=1/(m+t)$. Different
weight schedules or more refined repair rules may lead
to further improvements. One possible direction is to
exploit the full joint distribution over ordered pairs
in the saddle-gap solution, beyond the marginal information
used by our algorithm. Understanding the hardness of
computing approximate symmetric Nash equilibria and
determining the optimal polynomial-time approximation
threshold for this class also remain important open
problems. It would also be interesting to investigate
whether similar techniques can improve the $1/3$
approximation frontier for general bimatrix games.
More broadly, the use of auxiliary zero-sum games to
guide successive local repairs may prove useful in
other approximation problems.

\section*{Acknowledgement}

A.S. and P.J. were partially supported by the ONR grant N00014-24-1-2470. Part of this research was performed while A.S. was visiting the Institute for Mathematical and Statistical Innovation (IMSI), which is supported by the National Science Foundation (Grant No. DMS-2425650). G.F. was supported in part by the National Science Foundation awards CCF-2443068 and IIS-2552046, the Office of Naval Research grant N000142512296, and an AI2050 Early Career Fellowship.

\bibliographystyle{unsrtnat}
\bibliography{refs}

\end{document}

%% file: notation.tex
\def\[#1\]{\begin{align*}#1\end{align*}}

\NewDocumentCommand{\numberthis}{om}{%
  \IfNoValueTF{#1}{%
    \refstepcounter{equation}\tag{\theequation}%
  }{%
    \tag{#1}%
  }%
  \label{#2}%
}

\newtheorem{theorem}{Theorem}[section]

\newaliascnt{lemma}{theorem}
\newtheorem{lemma}[lemma]{Lemma}
\aliascntresetthe{lemma}

\newaliascnt{proposition}{theorem}

\aliascntresetthe{proposition}

\newaliascnt{corollary}{theorem}
\newtheorem{corollary}[corollary]{Corollary}
\aliascntresetthe{corollary}

\newaliascnt{claim}{theorem}

\aliascntresetthe{claim}

\theoremstyle{definition}
\newaliascnt{definition}{theorem}

\aliascntresetthe{definition}

\newaliascnt{question}{theorem}

\aliascntresetthe{question}

\theoremstyle{remark}
\newaliascnt{remark}{theorem}

\aliascntresetthe{remark}

\crefname{theorem}{theorem}{theorems}
\Crefname{theorem}{Theorem}{Theorems}
\crefname{lemma}{lemma}{lemmas}
\Crefname{lemma}{Lemma}{Lemmas}
\crefname{proposition}{proposition}{propositions}
\Crefname{proposition}{Proposition}{Propositions}
\crefname{corollary}{corollary}{corollaries}
\Crefname{corollary}{Corollary}{Corollaries}
\crefname{definition}{definition}{definitions}
\Crefname{definition}{Definition}{Definitions}
\crefname{section}{section}{sections}
\Crefname{section}{Section}{Sections}

\newcommand{\eps}{\varepsilon}

\numberwithin{equation}{section}

\theoremstyle{plain}

\newcommand{\repeatthmname}{}
\newtheorem*{repeatthminner}{\repeatthmname}
\newenvironment{repeatthm}[1]{%
  \renewcommand{\repeatthmname}{Theorem~\ref{#1}}%
  \begin{repeatthminner}%
}{%
  \end{repeatthminner}%
}

\newcommand{\repeatlemmaname}{}
\newtheorem*{repeatlemmainner}{\repeatlemmaname}

\newcommand{\repeatpropname}{}
\newtheorem*{repeatpropinner}{\repeatpropname}

\newcommand{\repeatcorollaryname}{}
\newtheorem*{repeatcorollaryinner}{\repeatcorollaryname}
\newenvironment{repeatcorollary}[1]{%
  \renewcommand{\repeatcorollaryname}{Corollary~\ref{#1}}%
  \begin{repeatcorollaryinner}%
}{%
  \end{repeatcorollaryinner}%
}